\newcommand{\leftb}{\scalebox{0.9}{\normalfont\texttt{[}}}
\newcommand{\rightb}{\scalebox{0.9}{\normalfont\texttt{]}}}
\newcommand{\B}[1]{\leftb #1 \rightb}
\newcommand{\To}{\Rightarrow}
\newcommand{\lto}{\hookrightarrow}
\newcommand{\cod}{\mathsf{c}}
\newcommand{\ant}{\mathsf{a}}
\newcommand{\e}{\mathsf{e}} % adjoint test
\newcommand{\tra}{Tr}
\newcommand{\val}[1]{\B{#1}}
\newcommand{\vval}[1]{\leftb\!\!\leftb #1 \rightb\!\!\rightb}
\newtheorem{theorem}{Theorem}
\newtheorem{lemma}{Lemma}
\newtheorem{proposition}{Proposition}
\newtheorem{definition}{Definition}
	\theoremstyle{definition}
	\theoremstyle{remark}
\title{Embedding Kozen-Tiuryn Logic into Residuated One-Sorted Kleene Algebra with Tests\thanks{
Preprint of: I. Sedlár and J.J.~Wannenburg:  Embedding Kozen-Tiuryn Logic into Residuated One-Sorted Kleene Algebra with Tests. In: A. Ciabattoni, E. Pimentel, R. J. G. B. de Queiroz (Eds.): \emph{Proc. 28th Int. Conference of Logic, Language, Information, and Computation (WoLLIC 2022)}, pp. 221-236. LNCS 13468. Springer, 2022. (Available online at \url{https://link.springer.com/chapter/10.1007/978-3-031-15298-6_14}.)
}}
\author{Igor Sedl\'ar and Johann  J.~Wannenburg}
\affil{The Czech Academy of Sciences, Institute of Computer Science 
	\authorcr Pod Vod\'arenskou v\v{e}\v{z}\'i 271/2, Prague, The Czech Republic}
	\date{ }
\begin{document}
\maketitle

\vspace*{-7mm}

\begin{abstract}
Kozen and Tiuryn have introduced the substructural logic $\mathsf{S}$ for reasoning about correctness of while programs (ACM TOCL, 2003). The logic $\mathsf{S}$ distinguishes between tests and partial correctness assertions, representing the latter by special implicational formulas. Kozen and Tiuryn's logic extends Kleene altebra with tests, where partial correctness assertions are represented by equations, not terms. Kleene algebra with codomain, $\mathsf{KAC}$, is a one-sorted alternative to Kleene algebra with tests that expands Kleene algebra with an operator that allows to construct a Boolean subalgebra of tests. In this paper we show that Kozen and Tiuryn's logic embeds into the equational theory of the expansion of $\mathsf{KAC}$ with residuals of Kleene algebra multiplication and the upper adjoint of the codomain operator. 

\noindent \paragraph{Keywords:} Hoare logic, 
Kleene algebra with codomain, 
Kleene algebra with tests, 
Partial correctness, 
Substructural logic
\end{abstract}

\section{Introduction}

Kleene algebra with tests \cite{Kozen1997}, \textsf{KAT}, is a simple algebraic framework for verifying properties of propositional while programs. \textsf{KAT} is two-sorted, featuring a Boolean algebra of tests embedded into a Kleene algebra of programs. The equational theory of \textsf{KAT} is PSPACE-complete \cite{CohenEtAl1996} and \textsf{KAT} subsumes Propositional Hoare logic, \textsf{PHL} \cite{Kozen2000}. A \textsf{PHL} partial correctness assertion  $\{ b \} \,p\, \{ c \}$, meaning that $c$ holds after each terminating execution of program $p$ in a state satisfying $b$, is represented in \textsf{KAT} by the equation $bp\bar{c} = 0$ or, equivalently, $bp = bpc$. The universal Horn theory (or the quasi-equational theory) of \textsf{KAT} consisting of quasi-equations of the form $E \longrightarrow p = q$ where $E$ is a finite set of equations of the form $r = 0$ embeds into the equational theory of \textsf{KAT} \cite{KozenSmith1997} and is therefore PSPACE-complete as well. 

Kozen and Tiuryn \cite{KozenTiuryn2003} introduce a substructural logic \textsf{S} that extends \textsf{KAT} and represents partial correctness assertions as implicational formulas $bp \To c$. They argue that the implicational rendering of partial correctness assertions has certain advantages over the equational one, e.g.~it facilitates a better distinction between local and global properties. \textsf{PHL} embeds into \textsf{S} and \textsf{S} is PSPACE-complete \cite{Kozen2001a}. Kozen and Tiuryn prove that \textsf{S} induces a left-handed \textsf{KAT} structure on programs. However, a more thorough discussion of the relation of \textsf{S} to residuated extensions of Kleene algebra, such as Pratt's action logic \cite{Pratt1991}, or to substructural logics in general is not provided. We believe that a deeper investigation of these connections would shed more light on the landscape of program logics and algebras.

In this paper we prove an embedding result that contributes to this. We note that a simple extension of \textsf{KAT} with a residual $\to$ of the Kleene algebra multiplication does not enable an obvious embedding of \textsf{S}: implication formulas of \textsf{S} are test-like in the sense that they entail the multiplicative unit $1$, but terms $p \to b$ of residuated \textsf{KAT} are not test-like, even if $b$ is. To avoid this problem, we work with Kleene algebra with codomain, \textsf{KAC} \cite{DesharnaisEtAl2006,DesharnaisStruth2011}, a one-sorted alternative to \textsf{KAT} featuring a test-forming codomain operator $\cod$. In particular, we introduce an extension of \textsf{KAC} with both residuals of Kleene algebra multiplication and the upper adjoint $\e$ of the codomain operator, which we call \textsf{SKAT}. We show that $\cod (p \to \e(b))$ behaves like $p \To b$ of \textsf{S}. Our main technical result is that \textsf{S} embeds into the equational theory of $\mathsf{SKAT}^{*}$, the class of algebras in \textsf{SKAT} that are based on $^{*}$-continuous Kleene algebras.

The paper is structured as follows. Section \ref{sec:KATS} recalls \textsf{KAT} and \textsf{S}. Section \ref{sec:RKAT} introduces a residuated version of \textsf{KAT}. We show that a naive translation of the implication connective of \textsf{S} into the residual operation does not constitute an embedding. In Section \ref{sec:KAC} we recall \textsf{KAC} and in Section \ref{sec:SKAT} we introduce \textsf{SKAT} and \textsf{SKAT}$^{*}$. Section \ref{sec:embedding} establishes the embedding result. Section \ref{sec:conclusion} concludes the paper and lists some interesting open problems.

\section{\textsf{KAT} and \textsf{S}}\label{sec:KATS}

In this section we recall \textsf{KAT} (Sect.~\ref{sec:KATS-1}) and we outline \textsf{S} (Sect.~\ref{sec:KATS-2}).

\subsection{Kleene algebra with tests}\label{sec:KATS-1}

We assume that the reader is familiar with the notion of an \emph{idempotent semiring}.

\begin{definition}
A \emph{Kleene algebra} \cite{Kozen1994} is an idempotent semiring $(K, \cdot, +, 1, 0)$ expanded with an operation $^{*} : K \to K$ such that
\begin{gather}
1 + x x^{*} \leq x^{*}\label{*1}\\
1 + x^{*}x \leq x^{*}\label{*2}\\
y + xz \leq z \implies x^{*}y \leq z\label{*3}\\
y + zx \leq z \implies y x^{*} \leq z\label{*4}
\end{gather}
A Kleene algebra is \emph{$^{*}$-continuous} iff it satisfies the condition
\begin{equation}\label{*cont}
x y^{*}z = \sum_{n \geq 0} xy^{n} z
\end{equation}
where $y^{0} = 1$ and $y^{n+1} = y^{n}y$. $\sum X$ is the supremum of the set $X \subseteq K$ with respect to the partial order induced by Kleene algebra addition $+$.
\end{definition}

It follows from the definition that $x^{*}$ is the least element $z$ such that $1 \leq z$, $xz \leq z $ and $zx \leq z$. A standard example of a Kleene algebra is a relational Kleene algebra where $K$ is a set of binary relations over some set $S$, $\cdot$ is relational composition, $+$ is set union, $^{*}$ is reflexive transitive closure, $1$ is identity on $S$ and $0$ is the empty set. A relational Kleene algebra where $K$ is the power set of some $S \times S$ will be called \emph{fully relational}. Another standard example is the Kleene algebra of regular languages over a finite alphabet, where $^{*}$ corresponds to finite iteration (Kleene star). Relational Kleene algebras and Kleene algebras of regular languages are both $^{*}$-continuous. Part of the definition of $^{*}$-continuous Kleene algebras is the requirement that suprema of sets of elements of the form $x y^{n} z$ exist. Not each Kleene algebra is $^{*}$-continuous \cite{Kozen1990}, but the equational theories of Kleene algebras and $^{*}$-continuous Kleene algebras coincide \cite{Kozen1994}.

\begin{definition}
A \emph{Kleene algebra with tests} \cite{Kozen1997} is a structure $(K, B, \cdot, +, ^{*}, 1, 0, \, \bar{ } \, )$ such that
\begin{itemize}
\item $(K, \cdot, +, ^{*}, 1, 0)$ is a Kleene algebra,
\item $B \subseteq K$, and
\item $(B, \cdot, +, \, \bar{ }\, , 1, 0) $ is a Boolean algebra.
\end{itemize}
A Kleene algebra with tests is $^{*}$-continuous if its underlying Kleene algebra is.
\end{definition}

Every Kleene algebra is a Kleene algebra with tests; take $B = \{ 0, 1 \}$ and define $\bar{0} = 1$, $\bar{1} = 0$ and $\bar{x} = x$ for $x \notin \{ 1, 0 \}$.\footnote{In contrast to the standard formulation of Kleene algebra with tests \cite{Kozen1997}, we consider $\,\bar{\,}\,$ to be defined on all elements of $K$.} A standard example of a ($^{*}$-continuous) Kleene algebra with tests is a relational Kleene algebra expanded with a Boolean subalgebra of the \emph{negative cone}, i.e.~the elements $x \leq 1$, which in the relational case are subsets of the identity relation. The class of Kleene algebras with tests is denoted as $\mathsf{KAT}$ and the class of $^{*}$-continuous Kleene algebras with tests is denoted as $\mathsf{KAT^{*}}$. As in the test-free case, the equational theory of $\mathsf{KAT}$ is identical to the equational theory of $\mathsf{KAT}^{*}$ \cite{KozenSmith1997}.

Kleene algebras with tests are able to represent while programs and facilitate equational reasoning about their partial correctness:
\begin{itemize}
\item $\mathbf{skip} = 1$ 
\item $p ; q = pq$
\item $\textbf{if}\: b \: \textbf{then} \: p \: \textbf{else}\: q$ $= (bp) +(\bar{b}q)$
\item $\textbf{while}\: b\: \textbf{do}\: p$ $= (bp)^{*}\bar{b}$
\item $\{ b \} \,p\, \{ c \}$ corresponds to $bp\bar{c} = 0$
\end{itemize}

\subsection{Substructural logic of partial correctness}\label{sec:KATS-2}

This section outlines the substructural logic \textsf{S} \cite{KozenTiuryn2003} that allows to represent a partial correctness assertion by a formula (term) instead of an equation as in \textsf{KAT}. As with \textsf{KAT}, the logic \textsf{S} is many-sorted. Let $\mathsf{B} = \{ \mathsf{b}_i \mid i \in \omega \}$ be the set of test variables and let $\mathsf{P} = \{ \mathsf{p}_i \mid i \in \omega \}$ be the set of program variables. We define the following sorts of syntactic objects:
\begin{center}
\begin{tabular}{ll}
tests & $b, c := \mathsf{b}_i \mid 0 \mid b \To c$\\[1mm]
programs & $p,q := \mathsf{p}_i \mid b \mid p \oplus q \mid p \otimes q \mid p^{+}$\\[1mm]
formulas & $f,g := b \mid p \To f$\\[1mm]
environments\hspace*{2mm} & $\Gamma, \Delta := \epsilon \mid \Gamma, p \mid \Gamma, f$\\[1mm]
sequents & $\Gamma \vdash f$
\end{tabular}
\end{center}
We define $1 := 0 \To 0$, $\neg b := b \To 0$ and $p^{*} := 1 \oplus p^{+}$. We will sometimes write $pq$ instead of $p \otimes q$ and $\bar{b}$ instead of $\neg b$. Let $\mathsf{E} = \mathsf{B} \cup \mathsf{P}$ and let $Ex_{\mathsf{S}}$, the set of $\mathsf{S}$-expressions, be the union of the sets of formulas, programs and environments.

Kozen and Tiuryn introduce three kinds of semantics for their language: semantics based on guarded strings, traces and binary relations, respectively. We will work only with binary relational semantics.

\begin{definition}
A \emph{Kozen--Tiuryn model} is a pair $M = (W, V)$, where $W$ is a non-empty set and $V: \mathsf{E} \to 2^{W \times W}$ such that $V(\mathsf{b}) \subseteq \mathrm{id}_{W}$ for all $\mathsf{b} \in \mathsf{B}$.

For each Kozen--Tiuryn model $M$, we define the \emph{$M$-interpretation} function $\val{\,}_M : Ex_{\mathsf{S}} \to 2^{W \times W}$ as follows:
\begin{itemize}
\item $\val{\mathsf{b}}_M = V(\mathsf{b})$
\item $\val{\mathsf{p}}_M = V(\mathsf{p})$
\item $\val{0}_M = \emptyset$
\item $\val{b \To c}_M = \{ (s,s) \mid (s,s) \not\in \val{b}_M \text{ or } (s,s) \in \val{c}_M \}$
\item $\val{p \oplus q}_M = \val{p}_M \cup \val{q}_M$
\item $\val{p \otimes q}_M = \val{p}_M \circ \val{q}_M$
\item $\val{p^{+}}_M = \val{p}_M^{+}$
\item $\val{p \To f}_M = \{ (s,s) \mid \forall t. (s,t) \in \val{p}_M \implies (t,t) \in \val{f}_M \}$
\item $\val{\epsilon}_M = \mathrm{id}_{W}$
\item $\val{\Gamma, \Delta}_M = \val{\Gamma}_M \circ \val{\Delta}_M$
\end{itemize}
(Here $^{+}$ denotes transitive closure and $\circ$ denotes relational composition.)

A sequent $\Gamma \vdash f$ is \emph{valid in $M$} iff, for all $s,t \in W$, if $(s,t) \in \val{\Gamma}_M$, then $(t,t) \in \val{f}_M$. 
\end{definition}

Observe that $\val{f}_M \subseteq \mathrm{id}_W$ for all formulas $f$; if $(s,s) \in \val{f}_M$, then we may say that formula $f$ is true in $s$. Note that $\val{bp \To c}_M$ is the set of $(s,s)$ such that, for all $t$, if $(s,s) \in \val{b}_M$ and $(s,t) \in \val{p}_M$, then $(t,t) \in \val{c}_M$. Hence, $bp \To c$ represents a partial correctness assertion: the formula is true in $s$ iff $b$ is true in $s$ and $p$ connects $s$ with a state $t$ only if $c$ is true in $t$. 

Figure \ref{fig:S} shows the sequent proof system for $\mathsf{S}$. A sequent $\Gamma \vdash f$ is \emph{provable in $\mathsf{S}$} iff there is a finite sequence of sequents that ends with $\Gamma \vdash f$ each of which is either of the form (Id) or (I$0$), or is derived from previous sequents using some of the inference rules.%; notation: $\Gamma \vdash_{\mathsf{S}} f$.

\begin{theorem}[Kozen and Tiuryn]\label{thm:KT-completeness}
$\Gamma \vdash f$ is provable in $\mathsf{S}$ iff $\Gamma \vdash f$ is valid in all models $M$.
\end{theorem}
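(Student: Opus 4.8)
The plan is to prove the two directions separately, as is standard for completeness theorems of this form. The soundness direction (provable implies valid) proceeds by induction on the length of the proof. First I would verify that the axioms (Id) and (I$0$) are valid in every Kozen--Tiuryn model: (Id) should express $p \vdash p$-style reflexivity and (I$0$) the validity of $\vdash 0 \To 0$ (equivalently $\vdash 1$), both of which follow directly from the interpretation clauses, in particular $\val{0 \To 0}_M = \mathrm{id}_W$. Then for each inference rule of Figure~\ref{fig:S} I would check that validity of the premises in an arbitrary $M$ forces validity of the conclusion. The rules governing $\To$ on the right and the structural handling of environments should reduce to the semantic clauses $\val{p \To f}_M$ and $\val{\Gamma,\Delta}_M = \val{\Gamma}_M \circ \val{\Delta}_M$; the rules for $\oplus$, $\otimes$, and $^{+}$ should mirror union, composition, and transitive closure of relations. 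This direction is essentially a bookkeeping exercise, one rule at a time.

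The completeness direction (valid implies provable) is the substantial half, and here I would build a canonical model from the syntax itself. The natural construction is to take $W$ to be the set of environments $\Gamma$ modulo provable interderivability, or alternatively the set of \emph{theories} (deductively closed sets of formulas), and to let states be related by a program $p$ exactly when the syntactic derivability relation says they should be. Concretely, I would define $V$ so that $(s,t) \in \val{p}_M$ iff moving from the theory $s$ along $p$ lands inside the theory $t$ in the sense dictated by the proof system, and then prove a \emph{truth lemma}: for every expression $e$, the canonical interpretation $\val{e}_M$ coincides with the syntactically intended relation. Given such a model, if $\Gamma \vdash f$ is \emph{not} provable, the truth lemma would exhibit a state witnessing that the sequent fails, contradicting validity, which yields completeness by contraposition.

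The hard part will be setting up the canonical model so that the truth lemma goes through simultaneously for all three sorts (formulas, programs, environments) and for the iteration operator $^{+}$. Tests and the implicational formulas $p \To f$ are subtle because they must end up sub-diagonal ($\val{f}_M \subseteq \mathrm{id}_W$), so the canonical states carry the burden of encoding which formulas hold there; getting the $\To$-clause to match requires the proof system to be strong enough to perform the relevant deductions, and I expect this to lean on admissibility of cut (or an explicit cut rule) and on closure conditions one must impose when defining the canonical states. The transitive-closure clause for $p^{+}$ is the other delicate point: matching $\val{p^{+}}_M = \val{p}_M^{+}$ typically needs an induction/least-fixpoint principle on the syntactic side, so I would expect the completeness argument to rely on a finitary or inductive characterization of $^{+}$-derivability. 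Since this is cited as a result of Kozen and Tiuryn, I anticipate the published proof either adapts the standard Kleene-algebra completeness machinery or reduces to an already-established completeness theorem for a guarded-string or trace semantics and then transfers it to the relational models; if a direct canonical-model construction proves unwieldy, that reduction would be my fallback strategy.
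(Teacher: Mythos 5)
First, note that the paper does not prove Theorem~\ref{thm:KT-completeness} at all: it is imported verbatim from Kozen and Tiuryn \cite{KozenTiuryn2003}, so the relevant comparison is with their original argument. Your soundness half is the standard rule-by-rule induction and is essentially right, though two readings of the axioms are off: (Id) is $b \vdash b$ for \emph{tests} only (a sequent $p \vdash p$ for a program is not even well formed, since succedents must be formulas), and (I$0$) is the axiom $\Gamma, 0, \Delta \vdash f$, which is valid \emph{vacuously} because $\val{0}_M = \emptyset$ forces $\val{\Gamma, 0, \Delta}_M = \emptyset$; it is not the theoremhood of $1 = 0 \To 0$.

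The genuine gap is in your primary completeness plan. A Lindenbaum-style canonical model over theories will not satisfy the standard-model condition $\val{p^{+}}_M = \val{p}_M^{+}$: the canonically defined relation for $p^{+}$ will in general properly contain the transitive closure of the relation for $p$, which is exactly the non-standardness obstacle familiar from PDL, and the rules (I$^{+}$), (E$^{+}$), (CC$^{+}$) of Figure~\ref{fig:S} give no leverage for forcing equality inside a truth-lemma induction --- cut admissibility and closure conditions on canonical states, which you lean on, do not touch this point, and you offer no concrete repair (a filtration-style fix changes the model and must then be rechecked against every clause, including the sub-identity constraint on formulas). Kozen and Tiuryn's actual proof avoids canonical models entirely: they prove completeness over the free guarded-string interpretation, by an argument in the style of the completeness of $\mathsf{KAT}$ over regular sets of guarded strings, and then transfer to the relational (and trace) semantics by realizing the guarded-string model relationally, so that validity over all relational models implies guarded-string validity and hence provability. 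Your explicitly named ``fallback'' is therefore the actual published route; as submitted, the main plan would stall at the truth lemma for $^{+}$, and the proposal should be judged incomplete on that point.
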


\begin{figure}
\hspace*{1.5cm}
\begin{minipage}{0.45\linewidth}
\begin{itemize}\itemsep=3mm%\leftmargin=2mm
\item[(Id)] $b \vdash b$
\item[(TC)] $\dfrac{\Gamma, b, \Delta \vdash f \qquad \Gamma, \bar{b}, \Delta \vdash f}{\Gamma, \Delta \vdash f}$
\item[(R$\To$)] $\dfrac{\Gamma, p \vdash f}{\Gamma \vdash p \To f}$
\item[(I$\otimes$)] $\dfrac{\Gamma, p, q, \Delta \vdash f}{\Gamma, p \otimes q, \Delta \vdash f}$
\item[(I$\oplus$)] $\dfrac{\Gamma, p, \Delta \vdash f \qquad \Gamma, q, \Delta \vdash f}{\Gamma, p \oplus q, \Delta \vdash f}$
\item[(I$^{+}$)] $\dfrac{g, p \vdash f \qquad g, p \vdash g}{g, p^{+} \vdash f}$
\item[(E$^{+}$)] $\dfrac{\Gamma, p^{+}, \Delta \vdash f}{\Gamma, p, \Delta \vdash f}$
\item[(CC$^{+}$)] $\dfrac{\Gamma, p^{+}, \Delta \vdash f}{\Gamma, p^{+}, p^{+}, \Delta \vdash f}$
\end{itemize}
\end{minipage}\qquad
\begin{minipage}{0.45\linewidth}
\begin{itemize}\itemsep=3mm%\leftmargin=1mm
\item[(I$0$)] $\Gamma, 0, \Delta \vdash f$
\item[(cut)] $\dfrac{\Gamma \vdash g \qquad \Gamma, g, \Delta \vdash f}{\Gamma, \Delta \vdash f}$
\item[(I$\To$)] $\dfrac{\Gamma, p, f, \Delta \vdash g}{\Gamma, p \To f, p, \Delta \vdash g}$
\item[(E$\otimes$)] $\dfrac{\Gamma, p \otimes q, \Delta \vdash f}{\Gamma, p, q, \Delta \vdash f}$ 
\item[(E$\oplus_{1}$)] $\dfrac{\Gamma, p \oplus q, \Delta \vdash f}{\Gamma, p, \Delta \vdash f}$
\item[(E$\oplus_{2}$)] $\dfrac{\Gamma, p \oplus q, \Delta \vdash f}{\Gamma, q, \Delta \vdash f}$
\item[(W$f$)] $\dfrac{\Gamma, \Delta \vdash g}{\Gamma, f, \Delta \vdash g}$
\item[(W$p$)] $\dfrac{\Gamma \vdash f}{p, \Gamma \vdash f}$ 
\end{itemize}
\end{minipage}\caption{The sequent proof system for $\mathsf{S}$.}\label{fig:S}
\end{figure}

\textsf{S} can be seen as a substructural logic \cite{Restall2000}. From that point of view, \textsf{S} is an extension of the Multiplicative-additive Lambek calculus with the transitive closure operator $^{+}$. However, \textsf{S} contains some rules not usual from the substructural logic perspective, namely, the sort-specific weakening rules (W$f$) and (W$p$), and the implication-formula rules (TC) and (I$\To$). It is therefore interesting to inquire how \textsf{S} relates to mainstream substructural logics. The embedding result of Section \ref{sec:embedding} is helpful in this respect. 

\section{Residuated \textsf{KAT}}\label{sec:RKAT}

The crucial difference between \textsf{KAT} and \textsf{S} is that the latter contains an implication operator $\To$ that is used to formalize partial correctness assertions as formulas (terms) of the language. Therefore, one may wonder if \textsf{S} can be thought of as \textsf{KAT} expanded with a residual of Kleene algebra multiplication. In this section we introduce residuated \textsf{KAT}, \textsf{RKAT}, and we show that a naive mapping from \textsf{S} to \textsf{RKAT} is not an embedding. 

We note that residuated Kleene algebras with tests were studied also in \cite{Jipsen2004}. In that framework, however, it is assumed by definition that the negative cone of the underlying Kleene algebra forms a Boolean algebra of tests. Kozen \cite{Kozen1997} argues, however, that this approach is not entirely satisfactory: tests are usually easily decidable (unlike other propositions, such as halting assertions), and not each Kleene algebra has a Boolean negative cone. Our definition of residuated Kleene algebras with tests adds residuals to \textsf{KAT} defined as usual.      

\begin{definition}
A \emph{residuated idempotent semiring} is a structure of the form $(K, \cdot, +, \to, \lto, 1, 0)$ where
\begin{itemize}
\item $(K, \cdot, +, 1, 0)$ is an idempotent semiring, and
\item $\to, \lto$ are binary operations on $K$ satisfying the \emph{residuation laws}
\begin{equation}
xy \leq z \iff x \leq y \to z \iff y \leq x \lto z
\end{equation}
\end{itemize}
\end{definition}

In residuated idempotent semirings, $\to$ is called the \emph{right residual} of $\cdot$ and $\lto$ is called the \emph{left residual}. An idempotent semiring is called \emph{right-residuated} (\emph{left-residuated}) if it is a reduct of a residuated idempotent semiring with only the right (left) residual. Note that residuated idempotent semirings necessarily have a top element: $x0 \leq 0$ implies $x \leq 0 \to 0$ and $0x \leq 0$ implies $x \leq 0 \lto 0$.

The study of residuated semirings has some precedent. Pratt adds residuals to Kleene algebras to obtain \emph{action algebras} \cite{Pratt1991}, and Kozen adds a lattice meet operation to action algebras to obtain \emph{action lattices} \cite{Kozen1994a}; see also \cite{Jipsen2004}. Equational theories of both action algebras and action lattices are undecidable \cite{Kuznetsov2021}.

\begin{definition}
A \emph{right-residuated Kleene algebra with tests} is a structure of the form $(K, B, \cdot, +, \,^{*}, \to,  \,\bar{\,}\, , 1, 0)$ where
\begin{itemize}
\item $(K, B, \cdot, +, \,^{*}, \,\bar{\,}\, , 1, 0) \in \mathsf{KAT}$, and
\item $(K, \cdot, +, \to, 1, 0)$ is a right-residuated idempotent semiring.
\end{itemize}
\end{definition}
 
 We will usually call right-residuated Kleene algebras with tests just residuated. The class of residuated Kleene algebras with tests will be denoted as \textsf{RKAT}.
 
One may be tempted to define $\bar{x}$ as $x \to 0$, but this temptation is better resisted. Note that $0 \in B$ in every Kleene algebra with tests, but $0 \to 0$ is the top element. Hence, if we have a residuated Kleene algebra with tests where $B$ does not contain the top element, then $B$ is not closed under the operation $x \to 0$. An example of such an algebra is given in Figure \ref{fig:res_not_comp}, where $B = \{ 0, 1 \}$ and $+$ is join behaving as indicated in the diagram on the left: $0 \to 0 = \top$ but $\bar{0} = 1$.

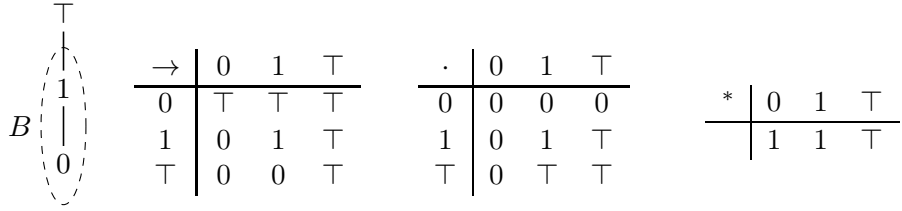
\begin{figure}\centering
\begin{tikzpicture}[every node/.style={inner sep=3pt}]
\node (0) {$0$};
\node[above of=0] (1) {$1$};
\node[above of=1] (2) {$\top$};
\node[draw,dashed,ellipse,fit=(0)(1), label={left:$B$},inner sep=0pt] {};
\draw[semithick] (0) -- (1) -- (2);
\end{tikzpicture}
\quad
\raisebox{1cm}{
\begin{tabular}{c|ccc}
$\to$ & 0 & 1 & $\top$\\
\hline
    0 & $\top$ & $\top$ & $\top$ \\
    1 & 0 & 1 & $\top$ \\
    $\top$ & 0 & 0 & $\top$
\end{tabular} \hspace{.5cm}
\begin{tabular}{c|ccc}
$\cdot$ & 0 & 1 & $\top$\\
\hline
    0 & 0 & 0 & 0 \\
    1 & 0 & 1 & $\top$ \\
    $\top$ & 0 & $\top$ & $\top$
\end{tabular}
}
\hspace{.5cm}
\raisebox{1cm}{
\begin{tabular}{c|ccc}
$^{*}$ & 0 & 1 & $\top$\\
\hline
   & 1 & 1 & $\top$
\end{tabular}
}\caption{A residuated Kleene algebra with tests where $x \to 0$ is not test complementation.}\label{fig:res_not_comp}
\end{figure}

For a similar reason, formulas $p \To b$ of \textsf{S} do not directly translate into $p \to b$; the former entail $1$, but the latter do not, as we have seen. However, a consideration of the relational case sheds some light on the matter. Recall that for $R \subseteq S \times S$ and $B \subseteq \mathrm{id}_S$,
\begin{gather*}
R \to B = \{ (s,u) \mid \forall v.\, (u, v) \in R \implies (s, v) \in B \}\\
R \To B = \{ (u,u) \mid \forall v. \, (u, v) \in R \implies (v,v) \in B\}
\end{gather*}
Now consider two operations $c, e : 2^{S\times S} \to 2^{S\times S}$ such that
\begin{gather}
c (R) = \{ (u, u) \mid \exists s.\, (s, u) \in R \} \label{codomain}\\
e(R) = \{ (s,u) \mid (u,u) \in R \}\label{codomain-adjoint}
\end{gather}
Note that $c$ and $e$ form a \emph{Galois connection} in the sense that, for all binary relations $Q$ and $R$ on $S$,
\begin{equation}\label{Galois}
c (R) \subseteq Q \iff R \subseteq e(Q) \, .
\end{equation}
We have the following observation.
\begin{proposition}\label{prop:teTo}
For all $R \subseteq S \times S$ and $B \subseteq \mathrm{id}_S$,
\begin{equation}\label{teTo}
c (R \to e(B)) = R \To B 
\end{equation}
\end{proposition}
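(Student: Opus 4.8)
The plan is to prove the set equality by a direct element chase, unfolding all the definitions. Since both $c(R \to e(B))$ and $R \To B$ consist solely of diagonal pairs $(u,u)$, it suffices to determine, for each $u \in S$, whether $(u,u)$ lies in each side and to show that the two membership conditions coincide.

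First I would compute the inner term $R \to e(B)$. Unfolding the definition of $e$ in \eqref{codomain-adjoint}, a pair $(s,v)$ belongs to $e(B)$ precisely when $(v,v) \in B$. Substituting this into the definition of the residual, $(s,u) \in R \to e(B)$ holds iff $\forall v.\ (u,v) \in R \implies (v,v) \in B$. The key observation is that this condition does not mention the first coordinate $s$ at all: membership of $(s,u)$ in $R \to e(B)$ depends only on $u$. In other words, $R \to e(B)$ is a ``rectangle'' $S \times U$, where $U = \{\, u \mid \forall v.\ (u,v) \in R \implies (v,v) \in B \,\}$.

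Next I would apply the codomain operator $c$. By \eqref{codomain}, $(u,u) \in c(R \to e(B))$ holds iff there exists some $s$ with $(s,u) \in R \to e(B) = S \times U$; since this amounts to $u \in U$ together with $S$ being nonempty, it collapses to $u \in U$ (the degenerate case $S = \emptyset$ is vacuous, both sides being empty). Hence $(u,u) \in c(R \to e(B))$ iff $\forall v.\ (u,v) \in R \implies (v,v) \in B$, which is exactly the defining condition for $(u,u) \in R \To B$. Comparing the two characterizations yields the claimed equality \eqref{teTo}.

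The main obstacle---really the only point requiring care---is the existential quantifier introduced by $c$: one must notice that, because membership in $R \to e(B)$ is insensitive to the source coordinate, this existential does not weaken the condition but reproduces it verbatim (modulo the harmless nonemptiness of $S$). I expect the remaining steps to be routine unfolding. One could alternatively route the argument through the Galois connection \eqref{Galois}, but the direct computation seems shortest.
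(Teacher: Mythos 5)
Your proof is correct and follows essentially the same route as the paper's: a direct unfolding of the definitions of $c$, $e$, and the residual, with the key point in both arguments being that membership of $(s,u)$ in $R \to e(B)$ is independent of $s$ (the paper runs this as two separate inclusions, using $u$ itself as the existential witness in one direction, while you package it as the ``rectangle'' $S \times U$ and note that $c$ then reproduces the condition verbatim). The rectangle formulation is a clean way to organize the same computation, and your handling of the existential quantifier, including the degenerate empty case, is sound.
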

\begin{proof}
If $(u, u) \in R \To B$, then $(u, u) \in R \to e(B)$ since $(v,v) \in B$ iff $(u, v) \in e(B)$ by (\ref{codomain-adjoint}). But then obviously  $(u,u) \in c (R \to e(B))$. Hence, $R \To B \, \subseteq \,  c (R \to e(B))$. Conversely, assume that $(u,u) \in  c (R \to e(B))$ and $(u, v) \in R$, then $(s, u) \in R \to e(B)$ by (\ref{codomain}) and so $(s,v) \in e(B)$. It follows that $(v,v) \in B$. Hence, $(u, u) \in R \To B$ and so $ c (R \to e(B)) \, \subseteq \, R \To B$.
\end{proof}

Hence, it seems that expanding \textsf{RKAT} with abstractions of $c$ and $e$ would make it possible to express the implication operator of \textsf{S}. An abstraction of $\cod$ was added to semirings and Kleene algebras in the works on Kleene algebra with (co)domain \cite{DesharnaisStruth2011,DesharnaisEtAl2006}. We discuss Kleene algebra with codomain in the next section.

\section{Kleene algebra with codomain}\label{sec:KAC}

Kleene algebra with codomain, \textsf{KAC}, is a one-sorted alternative to \textsf{KAT}: instead of assuming the existence of a Boolean subalgebra of tests, \textsf{KAC} introduces a unary \emph{codomain} operation $\cod$ such that the set of $\cod(x)$ for $x$ in the underlying Kleene algebra forms a Boolean algebra. One of the motivations for studying one-sorted alternatives to Kleene algebra with tests is that they seem to be better suited to automated theorem proving (see \cite{DesharnaisStruth2011}, p.~194). Our presentation follows \cite{DesharnaisStruth2011,DesharnaisEtAl2006}, where a symmetric variant of \textsf{KAC}, namely, Kleene algebra with \emph{domain}, was studied.

Kleene algebras with codomain add to Kleene algebras a unary operator that generalizes the \emph{anticodomain} operator on binary relations:
\begin{equation}\label{anticodomain}
a (R) = \{ (u,u) \mid \neg \exists s. (s,u) \in R \} \, .
\end{equation}
Note that, assuming (\ref{codomain}), $c (R) = a (a (R))$.

\begin{definition}
A \emph{Kleene algebra with codomain} is an algebra of the form\\ $(K, \cdot, +, \,^{*}, \ant, 1, 0)$ where $\ant : K \to K$ such that
\begin{gather}
x \cdot \ant(x) \leq 0 \label{ant_ecq}\\
\ant(x \cdot y) \leq \ant \big ( \ant(\ant(x)) \cdot y \big) \label{ant_locality}\\
\ant(x) + \ant(\ant(x)) = 1 \label{ant_lem}
\end{gather}
We define $\cod(x) := \ant(\ant(x))$.
\end{definition}

Note that the relational anticodomain operator (\ref{anticodomain}) satisfies the above equations. 

\begin{proposition}\label{prop:KAC1}
In every Kleene algebra with codomain, the operator $\cod$ satisfies the following codomain equations:
\begin{gather}
x \leq x \cdot \cod(x)\label{cod_preserver}\\
\cod(xy) = \cod(\cod(x)\cdot y) \label{cod_locality}\\
\cod(x) \leq 1\\
\cod(0) = 0\\
\cod(x+y) = \cod(x) + \cod(y) \label{cod_additive}
\end{gather}
Moreover, $\cod$ has the following properties:
\begin{gather}
\cod ( \cod (x)) = \cod (x) \label{cod_fixpoint}\\
\cod (x) \cdot \cod (x) = \cod (x) \label{cod_idempotent}\\
\cod(xy) \leq \cod(y) \label{cod_restriction}\\
x \leq x \cod (y) \implies \cod (x) \leq \cod (y) \label{cod_least_preserver}
\end{gather}
\end{proposition}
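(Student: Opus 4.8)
The plan is to derive the simple identities straight from the three anticodomain axioms and to reduce all the others to one characterization lemma for $\ant$. First I would record the immediate consequences. Instantiating \eqref{ant_ecq} at $x=1$ gives $\ant(1)=1\cdot\ant(1)\le 0$, hence $\ant(1)=0$; and \eqref{ant_lem} at once yields $\ant(x)\le 1$ and $\cod(x)=\ant(\ant(x))\le 1$, which is the third displayed identity. For \eqref{cod_preserver} I would prove the stronger equality $x=x\cod(x)$: by \eqref{ant_lem}, distributivity and \eqref{ant_ecq}, $x=x\cdot 1=x\ant(x)+x\cod(x)=x\cod(x)$. Two auxiliary facts come out the same way. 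Putting $y=1$ in \eqref{ant_locality} gives $\ant(x)\le\ant(\ant(\ant(x)))$; using $x=x\cod(x)$ with \eqref{ant_ecq} applied to $\cod(x)$ gives $x\cdot\ant(\ant(\ant(x)))=x\,\cod(x)\,\ant(\cod(x))=0$; and \eqref{ant_ecq} at $\cod(x)$ together with the previous inequality gives $\cod(x)\,\ant(x)=0$.

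The heart of the argument is the characterization $(\star)$: for every $z\le 1$, if $xz=0$ then $z\le\ant(x)$; that is, $\ant(x)$ is the greatest subidentity annihilating $x$ on the right. Granting $(\star)$, antitonicity of $\ant$ is immediate, since $x\le y$ gives $x\,\ant(y)\le y\,\ant(y)=0$ and hence $\ant(y)\le\ant(x)$; and $\ant(\ant(\ant(x)))=\ant(x)$ follows by pairing $\ant(x)\le\ant(\ant(\ant(x)))$ with the reverse inequality that $(\star)$ extracts from $x\cdot\ant(\ant(\ant(x)))=0$. This lemma is the main obstacle. Axioms \eqref{ant_ecq} and \eqref{ant_lem} in isolation only force $\ant(0)$ and $\cod(0)$ to be complementary idempotents, so they cannot prove the equivalent base case $\ant(0)=1$; establishing $(\star)$ genuinely needs the locality axiom \eqref{ant_locality} and amounts to reproducing the backbone of antidomain--semiring theory from \cite{DesharnaisStruth2011}, where I would expect essentially all of the work to lie.

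With $(\star)$, antitonicity, and $\ant(\ant(\ant(x)))=\ant(x)$ available, the rest is mechanical. Taking $z=1$ and $x=0$ in $(\star)$ gives $1\le\ant(0)$, hence $\ant(0)=1$ and $\cod(0)=\ant(\ant(0))=\ant(1)=0$. Applying $\ant$ to $\ant(\ant(\ant(x)))=\ant(x)$ yields $\cod(\cod(x))=\cod(x)$, i.e.\ \eqref{cod_fixpoint}; then the preserver equality at $\cod(x)$ gives $\cod(x)=\cod(x)\,\cod(\cod(x))=\cod(x)\,\cod(x)$, i.e.\ \eqref{cod_idempotent}. For \eqref{cod_least_preserver}, assuming $x\le x\,\cod(y)$ it suffices, applying the antitone $\ant$ once, to show $\ant(y)\le\ant(x)$; and $x\,\ant(y)\le x\,\cod(y)\,\ant(y)=0$ by $\cod(y)\,\ant(y)=0$, so $(\star)$ applies. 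Identity \eqref{cod_restriction} is then the instance of \eqref{cod_least_preserver} with $xy$ and $y$ in place of $x$ and $y$, since $xy=x(y\,\cod(y))=(xy)\,\cod(y)$; the same substitution trick, combined with the inequality $\cod(\cod(x)y)\le\cod(xy)$ read off \eqref{ant_locality} through antitonicity, gives both directions of \eqref{cod_locality}. Finally \eqref{cod_additive} follows from monotonicity of $\cod$ (an immediate consequence of antitonicity of $\ant$) for the inclusion $\cod(x)+\cod(y)\le\cod(x+y)$, and for the reverse inclusion from the standard manipulations---establishing idempotence and commutativity of $\cod$-elements---showing $\cod(x)+\cod(y)$ to be a $\cod$-element, whence \eqref{cod_least_preserver} applies to $x+y$.
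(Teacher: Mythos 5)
You take a genuinely different route from the paper: the paper outsources the five codomain equations \eqref{cod_preserver}--\eqref{cod_additive} to Desharnais--Struth (Theorem 8.7 of \cite{DesharnaisStruth2011} and Appendix D of \cite{DesharnaisStruth2008}) and only derives \eqref{cod_fixpoint}--\eqref{cod_least_preserver} in-house, whereas you reduce the whole proposition to a single annihilator lemma $(\star)$ and derive everything from it. Your downstream derivations all check out: antitonicity of $\ant$, $\ant(\ant(\ant(x)))=\ant(x)$, \eqref{cod_fixpoint}, \eqref{cod_idempotent}, \eqref{cod_least_preserver}, \eqref{cod_restriction}, and both directions of \eqref{cod_locality} (the missing half via $xy \leq x\,\cod(x)\,y\,\cod(\cod(x)y) \leq xy\,\cod(\cod(x)y)$ and \eqref{cod_least_preserver}) are correct. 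But the centre of the argument contains a false claim and a genuine gap. The false claim: $\ant(0)=1$ \emph{is} provable from \eqref{ant_ecq} and \eqref{ant_lem} alone. You yourself derive $\ant(1)=0$ from \eqref{ant_ecq} at $x=1$; now instantiate \eqref{ant_lem} at $x=1$ to get $\ant(1)+\ant(\ant(1))=1$, i.e.\ $\ant(0)=\ant(\ant(1))=1$, whence $\cod(0)=\ant(1)=0$, with no use of \eqref{ant_locality}. So your stated reason for treating $(\star)$ as a deep fact to be imported wholesale from the antidomain-semiring literature is wrong.

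The gap is that $(\star)$, the load-bearing lemma, is asserted rather than proved and described as ``essentially all of the work,'' when in fact it has a four-line proof from facts you already established. Suppose $z\leq 1$ and $xz=0$. Then $1=\ant(0)=\ant(xz)\overset{\eqref{ant_locality}}{\leq}\ant(\cod(x)\cdot z)$, so $\ant(\cod(x)z)=1$ and hence $\cod(\cod(x)z)=\ant(1)=0$; by your preserver equality, $\cod(x)z=\cod(x)z\cdot\cod(\cod(x)z)=0$; finally $z=(\ant(x)+\cod(x))z=\ant(x)z+\cod(x)z=\ant(x)z\leq\ant(x)$, using $z\leq 1$ in the last step. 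With this patch your development is complete and, unlike the paper's, nearly self-contained. The one remaining deferral is in your additivity step: ``establishing idempotence and commutativity of $\cod$-elements'' hides the only genuinely laborious piece of the standard theory (commutativity of tests, needed both for $\ant(x+y)=\ant(x)\ant(y)$ and for the uniqueness-of-complements argument identifying $\cod(x)+\cod(y)$ with $\ant(\ant(x)\ant(y))$). That deferral is on a par with the paper's own citation and is acceptable, but it should be flagged as the real residual debt rather than bundled with the ``mechanical'' steps.
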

\begin{proof}
For the codomain equations, see \cite{DesharnaisStruth2011}, Theorem 8.7, and the more detailed proof of the theorem in Appendix D of \cite{DesharnaisStruth2008}. The arguments there are given for the domain and antidomain operators, but similar arguments work for codomain and anticodomain. 

The fixpoint property (\ref{cod_fixpoint}) follows from $\cod(xy) = \cod(\cod(x)\cdot y)$ and idempotence (\ref{cod_idempotent})  follows from the fixpoint property, $x \leq x \cdot \cod(x)$, and $\cod(x) \leq 1$. The codomain restriction property (\ref{cod_restriction}) is established by noting that $\cod (xy) \leq \cod (\cod (x)y) \leq \cod (1y) \leq \cod (y)$. The least preserver property (\ref{cod_least_preserver}) is established by noting that $x \leq x \cod (y)$ implies $\cod (x) \leq \cod (x \cod (y)) \leq \cod (\cod (y)) \leq \cod (y)$.
\end{proof}

\begin{proposition}\label{prop:KAC2}
Let $\cod(K) = \{ x \mid x \in K \And \exists y. (x = \cod(y))\}$. In every Kleene algebra with codomain,
\begin{enumerate}
\item $(\cod(K), \cdot, +, 1, 0)$ is a subalgebra of $(K, \cdot, +, 1, 0)$;
\item $(\cod(K), \cdot, +, 1, 0)$ is a bounded distributive lattice;
\item $\cod(x) + \ant(\cod(x)) = 1$ and $\cod(x) \cdot \ant(\cod(x)) = 0$.
\end{enumerate}
Hence, $\cod(\mathscr{K}) = (\cod(K), \cdot, +, \ant, 1, 0)$ is a Boolean algebra.
\end{proposition}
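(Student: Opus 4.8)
The plan is to establish the three enumerated items in order, and then observe that together they force $\cod(\mathscr{K})$ to be a Boolean algebra. Throughout I would lean heavily on the codomain equations already proved in Proposition \ref{prop:KAC1}, especially the fixpoint property $\cod(\cod(x)) = \cod(x)$, additivity $\cod(x+y) = \cod(x)+\cod(y)$, idempotence $\cod(x)\cod(x) = \cod(x)$, and the bound $\cod(x) \leq 1$.

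For item 1, I need to check that $\cod(K)$ contains $0$ and $1$ and is closed under $\cdot$ and $+$. The elements $0 = \cod(0)$ and $1 = \cod(1)$ (the latter since $1 \leq 1 \cdot \cod(1) \leq 1$ forces $\cod(1) = 1$) are codomain elements. Closure under $+$ is immediate from additivity $\cod(x)+\cod(y) = \cod(x+y)$. Closure under $\cdot$ is the first real step: I would show that for $p,q \in \cod(K)$ the product $pq$ is again a codomain element, the natural candidate being $pq = \cod(pq)$, which follows by combining $\cod(pq) \leq \cod(q) = q \leq 1$ with $pq \leq p \leq 1$ and the least-preserver property (\ref{cod_least_preserver}) applied in both directions, together with commutativity of multiplication on the sub-poset below $1$ (since $p,q \leq 1$ gives $pq = p \wedge q$). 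This meet characterization is the crux and is worth isolating: for $x,y \leq 1$ one has $xy \leq x$, $xy \leq y$, and $xy$ is the greatest common lower bound, so $\cdot$ restricted to $\cod(K)$ is the lattice meet.

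For item 2, once $\cdot$ is identified with meet and $+$ with join on $\cod(K)$, the lattice structure is inherited from the ambient idempotent-semiring order, with $0$ and $1$ as bottom and top; distributivity follows from the semiring distributive law $x(y+z) = xy+xz$ read on codomain elements. For item 3, I would use the defining \textsf{KAC} axioms directly: $\cod(x)+\ant(\cod(x)) = 1$ is exactly the law of excluded middle (\ref{ant_lem}) applied to $\cod(x)$ after invoking the fixpoint property to rewrite $\ant(\ant(\cod(x)))$ as $\cod(x)$, and $\cod(x)\cdot\ant(\cod(x)) \leq 0$ is an instance of (\ref{ant_ecq}); I would also note $\ant(\cod(x)) \in \cod(K)$ so that complementation stays inside the subalgebra. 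The final Boolean conclusion then assembles: item 2 gives a bounded distributive lattice, and item 3 exhibits for each element a complement, so $\cod(\mathscr{K})$ is a complemented bounded distributive lattice, hence a Boolean algebra.

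The main obstacle I anticipate is item 1, specifically closure under multiplication, since additivity handles $+$ for free but there is no directly stated ``$\cod$ is multiplicative'' law. The work lies in showing $pq$ is a fixed point of $\cod$, and the cleanest route is the meet identification via the least-preserver property (\ref{cod_least_preserver}) and the restriction property (\ref{cod_restriction}); everything downstream (the lattice axioms and the Boolean conclusion) is then essentially bookkeeping with the axioms of Proposition \ref{prop:KAC1} and the \textsf{KAC} definition.
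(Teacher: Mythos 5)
Your items 2 and 3, and the final assembly into a Boolean algebra, are correct and consistent with the paper, but there is a genuine gap at exactly the step you yourself flag as the crux: closure of $\cod(K)$ under multiplication. Your justification rests on the claim that for arbitrary $x, y \leq 1$ in an idempotent semiring the product $xy$ is the greatest lower bound of $x$ and $y$. That is false in general: a subidentity $z$ need only satisfy $zz \leq z$ (from $z \leq 1$), not $z \leq zz$, so from $z \leq x$ and $z \leq y$ one obtains $zz \leq xy$ but not $z \leq xy$; product-as-meet is special to multiplicatively idempotent subidentities. If you instead restrict attention to codomain elements, where idempotence (\ref{cod_idempotent}) does hold, the meet identification becomes circular: to see that $pq$ (for $p, q \in \cod(K)$) is the meet \emph{inside} $\cod(K)$ you need $pq \in \cod(K)$, which is precisely what is to be proven. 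Concretely, the facts you list yield $pq \leq \cod(pq)$ (from (\ref{cod_preserver}) together with $p, q \leq 1$) and $\cod(pq) \leq \cod(q) = q$ (from (\ref{cod_restriction})), but not the missing inequality $\cod(pq) \leq p$: to extract it from the least-preserver property (\ref{cod_least_preserver}) with $y := p$ you would need the premise $pq \leq pq \cdot p$, which amounts to commutativity of test multiplication --- itself a consequence of the Boolean structure being built, and not yet available. Note also that the restriction law (\ref{cod_restriction}) only bounds $\cod(xy)$ by the \emph{right} factor; Proposition \ref{prop:KAC1} provides no symmetric law for the left factor, so ``applying (\ref{cod_least_preserver}) in both directions'' does not go through.

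It is telling that the paper itself gives no hand derivation here: item 1 is reduced to the identity $\cod(x) \cdot \cod(y) = \cod(\cod(x) \cdot \cod(y))$, which the authors report having verified with Prover 9 rather than deriving on paper from the anticodomain axioms (\ref{ant_ecq})--(\ref{ant_lem}). A workable hand route, following the Desharnais--Struth treatment of antidomain, would be to first establish an export law such as $\cod(y \cdot \cod(x)) = \cod(y) \cdot \cod(x)$ from the $\ant$ axioms, from which closure is immediate by instantiating $y := \cod(z)$; but that is a genuine multi-step derivation through (\ref{ant_locality}) and (\ref{ant_lem}), not the bookkeeping your proposal suggests. The remainder of your plan --- $\cod(0) = 0$, $\cod(1) = 1$, closure under $+$ via (\ref{cod_additive}), absorption and distributivity from the semiring laws plus (\ref{cod_idempotent}) and $\cod(x) \leq 1$, and item 3 from (\ref{ant_ecq}) and (\ref{ant_lem}) together with $\ant(\cod(x)) = \ant(x)$ --- is sound and matches the paper's (terse) argument.
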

\begin{proof}
1.~follows from $\cod (1) = 1$, $\cod (0) = 0$, $\cod (x) + \cod (y) = \cod (x+y)$, and $\cod (x) \cdot \cod(y) = \cod (\cod (x) \cdot \cod (y))$. We verified the latter using Prover 9 \cite{McCune2010}. 2.~follows from 1., together with $\cod (x) \leq 1$ and $\cod (x) = \cod (x) \cod (x)$ of Proposition \ref{prop:KAC1}. 3.~follows from the axioms for $\ant$.
\end{proof}

It can be shown that every full relational \textsf{KAT} can be seen as a \textsf{KAC} where $\ant$ is the relational anticodomain operation, $\ant(R) := \{ (u,u) \mid \neg \exists s. \, (s, u) \in R \}$. Moreover, Proposition \ref{prop:KAC2} shows that every \textsf{KAC} gives rise to a \textsf{KAT}. A more detailed argument involving a translation from the language of \textsf{KAT} into the language of \textsf{KAC} then shows that the equational theory of \textsf{KAT} embeds into the equational theory of \textsf{KAC}. We omit the details.

\section{\textsf{SKAT}: Residuated \textsf{KAC} with a Galois connection}\label{sec:SKAT}

In this section, we introduce \textsf{SKAT}, an extension of \textsf{KAC} with both residuals of Kleene algebra multiplication and the upper adjoint of the codomain operator. The $^{*}$-continuous variant of \textsf{SKAT} will be denoted as $\mathsf{SKAT^{*}}$. The next section shows that \textsf{S} embeds into the equational theory of $\mathsf{SKAT^{*}}$.

\begin{definition}
An \emph{$\mathsf{S}$-type Kleene algebra with codomain} is a structure of the form $(K, \cdot, +, \to, \lto \,^{*}, \ant, \e, 1, 0)$ where 
\begin{itemize}
\item $(K, \cdot, +, \lto, \to, \,^{*}, 1, 0)$ is a residuated Kleene algebra,
\item $(K, \cdot, +, \,^{*}, \ant, 1, 0)$ is a Kleene algebra with codomain, and
\item $\e$ is a unary operation on $K$ that satisfies the following:
\begin{gather}
\ant (\ant (\e (x))) \leq x \label{ante1}\\
x \leq \e (\ant (\ant (x))) \label{ante2}\\
\e (x) \leq \e (x + y) \label{e-mon}
\end{gather}
\end{itemize}
An $\mathsf{S}$-type Kleene algebra with codomain is $^{*}$-continuous iff its underlying Kleene algebra is $^{*}$-continuous. The class of $\mathsf{S}$-type Kleene algebras with codomain will be denoted as \textsf{SKAT} and the class of the $^{*}$-continuous ones as $\mathsf{SKAT}^{*}$.
\end{definition}

We will call $\mathsf{S}$-type Kleene algebras with codomain simply \textsf{SKAT}-algebras. We will usually omit bracketing between elements of $\{ \ant, \e \}$, writing $\e\cod(x)$ instead of $\e ( \cod (x))$, for instance.

Every \textsf{SKAT}-algebra is a Kleene algebra with codomain by definition, and so the test algebra $(\cod(\mathscr{K}), \cdot, +, \ant, 1, 0)$ of each \textsf{SKAT}-algebra $\mathscr{K}$ is a Boolean algebra. It is well known that residuated Kleene algebras form a variety \cite{Pratt1991}, and so \textsf{SKAT} forms a variety.

\begin{proposition}\label{prop:properties_SKAT}
The following hold in each $\mathsf{SKAT}$-algebra:
\begin{gather}
\cod(x) \leq y \iff x \leq \e(y) \label{cod_Galois}\\
\cod (x \to y) \leq x \to x \cod (y) \label{cod_implication}\\
\ant (x) = \cod (\cod (x) \to 0) \label{cod_to_ant} 
\end{gather}
\end{proposition}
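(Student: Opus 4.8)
The plan is to dispatch the three claims in order, using monotonicity of $\cod$ and $\e$, the residuation law, and the codomain facts of Propositions~\ref{prop:KAC1} and~\ref{prop:KAC2}. Note first that $\cod$ is monotone by additivity (\ref{cod_additive}) and $\e$ is monotone by (\ref{e-mon}). For the Galois connection (\ref{cod_Galois}) I would just chase the two adjunction axioms: if $\cod(x)\le y$ then $\e\cod(x)\le\e(y)$, and (\ref{ante2}) gives $x\le\e\cod(x)$, so $x\le\e(y)$; conversely, if $x\le\e(y)$ then $\cod(x)\le\cod\e(y)$, and (\ref{ante1}) gives $\cod\e(y)\le y$, so $\cod(x)\le y$.

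For (\ref{cod_implication}) I would pass to residuated form: by the residuation law it suffices to prove $\cod(x\to y)\cdot x\le x\cod(y)$. Writing $w:=x\to y$, residuation yields $wx\le y$, hence $\cod(\cod(w)\,x)=\cod(wx)\le\cod(y)$ by locality (\ref{cod_locality}). Applying the preservation law (\ref{cod_preserver}) to $z:=\cod(w)\,x$ then gives $z\le z\cod(z)\le\cod(w)\,x\cod(y)\le x\cod(y)$, where the last step uses $\cod(w)\le1$. This is exactly the required inequality.

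The substantive claim is (\ref{cod_to_ant}). Put $t:=\cod(x)$ and $r:=\cod(t\to0)$; both are elements of the Boolean algebra $\cod(\mathscr{K})$ of Proposition~\ref{prop:KAC2}, so it is enough to show that $r$ is the complement of $t$ there. For $r\cdot t=0$: residuation gives $(t\to0)\,t\le0$, so $\cod(r\,t)=\cod((t\to0)\,t)\le\cod(0)=0$ by (\ref{cod_locality}), and then (\ref{cod_preserver}) forces $rt\le(rt)\cod(rt)=0$. For $r+t=1$: by Proposition~\ref{prop:KAC2} the complement $\ant(t)$ is itself a test and $\ant(t)\cdot t=0$ (tests commute), so residuation gives $\ant(t)\le t\to0$, and applying $\cod$ together with $\cod(\ant(t))=\ant(t)$ (using (\ref{cod_fixpoint}), as $\ant(t)\in\cod(K)$) yields $\ant(t)\le r$; hence $1=t+\ant(t)\le t+r\le1$. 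Thus $r$ complements $t$ in $\cod(\mathscr{K})$, so $r=\ant(t)=\ant(\cod(x))$.

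The final link---and the main obstacle---is the identity $\ant(\cod(x))=\ant(x)$, i.e.\ the three-equals-one law $\ant\ant\ant=\ant$, which is what collapses $\ant(\cod(x))$ back to $\ant(x)$. This is a standard identity of Kleene algebra with codomain; I would cite it from \cite{DesharnaisStruth2011} alongside the facts already borrowed for Proposition~\ref{prop:KAC1}. With it in hand, $r=\ant(x)$, completing (\ref{cod_to_ant}).
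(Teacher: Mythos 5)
Your proof is correct. For (\ref{cod_Galois}) and (\ref{cod_implication}) you follow essentially the paper's own route: the Galois connection is the standard consequence of (\ref{ante1}), (\ref{ante2}) and monotonicity, and your residuated reformulation of (\ref{cod_implication}) --- proving $\cod(x\to y)\,x \le x\,\cod(y)$ from $(x\to y)x\le y$ via locality (\ref{cod_locality}), the preserver law (\ref{cod_preserver}) and $\cod(w)\le 1$ --- is the same computation the paper performs, merely packaged without the auxiliary fixpoint element $z$. The genuine divergence is in (\ref{cod_to_ant}). The paper proves the two inequalities separately: $\ant(x)\le\cod(\cod(x)\to 0)$ by residuation, and the converse via the auxiliary quasi-equation $x\,\cod(y)=0 \implies \cod(x)\le\ant(y)$ (adapted from Desharnais--Struth), instantiated using (\ref{cod_implication}). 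You instead verify that $r=\cod(\cod(x)\to 0)$ and $t=\cod(x)$ satisfy $rt=0$ (by a direct locality-plus-preserver computation, not needing (\ref{cod_implication}) at all) and $r+t=1$, and then invoke uniqueness of complements in the bounded distributive lattice of Proposition \ref{prop:KAC2}; note that your verification of $r+t=1$ in fact contains the paper's first inequality $\ant(t)\le r$, so what the uniqueness argument buys you is replacing the quasi-equational detour for the converse inequality by a cleaner lattice-theoretic step --- at the small cost of checking that $r$, $t$ and $\ant(t)=\cod(\ant(x))$ all lie in $\cod(K)$, which you do correctly. Both arguments ultimately rest on the identity $\ant\cod(x)=\ant(x)$, i.e.\ $\ant\ant\ant=\ant$; you flag this honestly as the remaining obstacle and cite it from \cite{DesharnaisStruth2011}, whereas the paper attributes it, rather tersely, to (\ref{ant_locality}) (whose axiom form yields only $\ant(x)\le\ant\cod(x)$ directly). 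Your citation is at the same level of rigor as the paper's own borrowing of the codomain equations in Proposition \ref{prop:KAC1}, so no gap remains.
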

\begin{proof}
(\ref{cod_Galois}) follows directly from the \textsf{SKAT} axioms concerning $\e$. (\ref{cod_implication}) is established as follows. Take any $z$ such that $\cod (z) = z$; note that $\cod (x \to y)$ is of this form. We reason as follows:
\begin{itemize}
\item[] $z \leq \cod (x \to y) \implies zx \leq \cod (x \to y)x \overset{(\ref{cod_locality})}{\implies} \cod (zx) \leq \cod ( (x \to y)x)$

\item[] $\implies \cod (zx) \leq \cod (y) \implies zx \cod (zx) \leq \cod(z)x \cod (y) \implies zx \leq x \cod (y)$

\item[] $\implies z \leq x \to x \cod (y) \implies \cod (z) \leq \cod (x \to x \cod (y)) \implies z \leq \cod (x \to x \cod (y))$. 
\end{itemize}

(\ref{cod_to_ant}) is established as follows. Note that $\ant \cod (x) = \ant (x)$ by (\ref{ant_locality}) and $\cod(x) \ant (x) = \ant(x) \cod(x)$ by Proposition \ref{prop:KAC2}. Hence, $\ant (x) \leq \cod (x) \to 0$ by (\ref{ant_ecq}) and residuation. It follows that $\cod\ant(x) \leq \cod (\cod(x) \to 0)$ and so $\ant(x) \leq \cod (\cod (x) \to 0)$.

The converse inequality follows from 
\begin{equation}\label{26}
x \cod (y) = 0 \implies \cod (x) \leq \ant (y)
\end{equation}
In particular, $\cod (\cod (x) \to 0) \cod (x) = 0$ by (\ref{cod_implication}), and so $\cod \cod (\cod (x) \to 0) \leq \ant (x)$ by (\ref{26}), which means that $\cod (\cod (x) \to 0) \leq \ant (x)$ by the fixpoint property (\ref{cod_fixpoint}). The quasi-equation (\ref{26}) is established similarly as the quasi-equation (26) on p.~386 of \cite{DesharnaisStruth2008}. If $x \cod (y) = 0$, then $\ant (\cod (x) \cod (y)) = 1$ and so $\cod (x) \cod (y) = 0$. But then 
$$\cod (x) = \cod (x) (\cod (y) + \ant (y)) = \cod (x) \cod (y) + \cod (x) \ant (y) = \cod (x) \ant (y) \leq \ant (y) \, .$$
\end{proof}

We note that an extension of Kleene algebra with domain with a right residual of Kleene algebra multiplication is considered in \cite{DesharnaisStruth2008,DesharnaisStruth2011}. However, the goal there is to induce a Heyting algebra of tests, and $\to$ is assumed to be a partial operation, defined only on the test algebra. It is also assumed that the test algebra is closed under $\to$, which we may transpose to the setting that uses codomain instead of domain as follows:
\begin{equation}\label{cod_tests_to-closed}
\cod (\cod (x) \to \cod (y)) = \cod (x) \to \cod (y)
\end{equation}
In our setting, $\to$ is a total operation which is not included in the signature of (Boolean) test algebras. It can also be shown that (\ref{cod_tests_to-closed}) does not hold; see Fig.~\ref{fig:cod_test_to-closed_invalid}. Operations $\cdot$ and $^{*}$ are defined as in Fig.~\ref{fig:res_not_comp}, $\lto$ is identical to $\to$. It is easily verified that $\cod (0) \to \cod (0) = \top$, and $\cod (\top) = 1$.

\begin{figure}\centering
\begin{tikzpicture}[every node/.style={inner sep=3pt}]
\node (0) {$0$};
\node[above of=0] (1) {$1$};
\node[above of=1] (2) {$\top$};
\draw[semithick] (0) -- (1) -- (2);
\end{tikzpicture}
\hspace{1cm}
\raisebox{1cm}{
\begin{tabular}{c|ccc}
$\to$ & 0 & 1 & $\top$\\
\hline
    0 & $\top$ & $\top$ & $\top$ \\
    1 & 0 & 1 & $\top$ \\
    $\top$ & 0 & 0 & $\top$
\end{tabular}
}
\hspace{.3cm}
\raisebox{1cm}{
\begin{tabular}{c|ccc}
$\ant$\: & 0 & 1 & $\top$\\
\hline
   & 1 & 0 & $0$
\end{tabular}\hspace{.3cm}
\begin{tabular}{c|ccc}
$\e$\: & 0 & 1 & $\top$\\
\hline
   & 0 & $\top$ & $\top$
\end{tabular}
}\caption{A \textsf{SKAT}-algebra falsifying (\ref{cod_tests_to-closed}).}\label{fig:cod_test_to-closed_invalid}
\end{figure} 

\section{The embedding result}\label{sec:embedding}

In this section we prove that \textsf{S} embeds into the equational theory of \textsf{SKAT}$^{*}$.

Let $\mathsf{X} = \{ \mathsf{x}_i \mid i \in \omega \}$ be a countable set of variables. The set of terms $Tm$ is the set of expressions formed using the following grammar:
$$ p,q := \mathsf{x}_i \mid 1 \mid 0 \mid p \cdot q \mid p + q \mid p \to q \mid p^{*} \mid \cod(p) \mid \e(p)\, .$$ (We abuse the notation by re-using the same variables that range over programs in \textsf{S}, but this will hopefully cause no confusion.) An equation is an ordered pair of terms, denoted as $p \approx q$. We use $p \preceq q$ as short for $p + q \approx q$. If $\mathscr{K} \in \mathsf{SKAT}$, then a \emph{$\mathscr{K}$-valuation}  is any homomorphism from $Tm$ into $\mathscr{K}$. An equation $p \approx q$ \emph{is satisfied} by a $\mathscr{K}$-valuation $\val{\,\,}$ iff $\val{p} = \val{q}$.  The equational theory of \textsf{SKAT} (\textsf{SKAT}$^{*}$) is the set of equations that are satisfied by all $\mathscr{K}$ valuations where $\mathscr{K} \in \mathsf{SKAT}$ ($\mathscr{K} \in \mathsf{SKAT}^{*}$).

\begin{definition}
We define $\tra : Ex_{\mathsf{S}} \to Tm$ as follows:
\begin{itemize}
\item $\tra (\mathsf{p}_n) = \mathsf{x}_{2n}$
\item $\tra (\mathsf{b}_n) = \cod (\mathsf{x}_{2n +1})$
\item $\tra (0) =  \cod(0)$
\item $\tra (b \To c) = \cod (\tra (b) \to \e(\tra (c)))$
\item $\tra (p \oplus q) = \tra (p) + \tra (q)$
\item $\tra (p \otimes q) = \tra (p) \cdot \tra (q)$
\item $\tra (p^{+}) = \tra (p) \cdot \tra (p)^{*}$
\item $\tra (p \To f) = \cod (\tra (p) \to \e(\tra (f)))$
\item $\tra (\epsilon) = 1$
\item $\tra (\Gamma, \Delta) = \tra (\Gamma) \cdot \tra (\Delta)$
\end{itemize}
\end{definition}

\begin{theorem}\label{thm:embedding}
A sequent $\Gamma \vdash f$ is provable in $\mathsf{S}$ iff $\cod \big( \tra (\Gamma)\big) \leq \tra (f)$ belongs to the equational theory of $\mathsf{SKAT}^{*}$.
\end{theorem}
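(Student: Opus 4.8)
The plan is to prove the biconditional in two independent directions. The forward direction, from provability in $\mathsf{S}$ to membership in the equational theory, is a \textbf{soundness} argument by induction on the length of the $\mathsf{S}$-derivation: I would check that each axiom of Figure \ref{fig:S} translates under $\tra$ into an inequality $\cod(\tra(\Gamma))\leq\tra(f)$ that holds in every $\mathsf{SKAT}^*$-algebra, and that each rule is a valid inference between such inequalities. The backward direction, from membership to provability, is a \textbf{completeness} argument that I would route through the Kozen--Tiuryn relational completeness theorem (Theorem \ref{thm:KT-completeness}): contrapositively, a non-provable sequent is refuted in some Kozen--Tiuryn model $M$, which I would convert into an $\mathsf{SKAT}^*$-valuation on the full relational algebra over the domain of $M$ that violates the translated inequality.

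Before the induction I would record three facts. First, \emph{every formula translation is a test}, i.e.\ $\cod(\tra(f))=\tra(f)$, since $\tra(b)$ and $\tra(p\To f)$ are both of the form $\cod(\,\cdot\,)$ and $\cod$ is a fixpoint by (\ref{cod_fixpoint}); in particular $\tra(f)\leq 1$, so $\cod(\tra(\Gamma))\leq\tra(f)$ is a comparison of tests. Second, \emph{negation translates to Boolean complement}, $\tra(\neg b)=\ant(\tra(b))$: unfolding $\tra(b\To 0)=\cod(\tra(b)\to\e\cod(0))$ and using $\e(0)=0$ (which follows from $\cod\e(0)\leq 0$ by (\ref{ante1}) together with $\e(0)\leq\e(0)\cod\e(0)$ from (\ref{cod_preserver})) reduces this to $\cod(\cod(\tra(b))\to 0)$, which is $\ant(\tra(b))$ by (\ref{cod_to_ant}). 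Third, a \emph{modus ponens inequality} $\tra(p\To f)\cdot\tra(p)\leq\tra(p)\cdot\tra(f)$: applying (\ref{cod_implication}) with $x=\tra(p)$ and $y=\e(\tra(f))$ gives $\tra(p\To f)\leq\tra(p)\to\tra(p)\cod\e(\tra(f))$, and since $\cod\e(\tra(f))\leq\tra(f)$ by (\ref{ante1}), residuation yields the claim. With these in hand the rule check is routine: (Id) and (I$0$) are immediate; (I$\otimes$), (E$\otimes$) hold because the two translations coincide; (I$\oplus$), (E$\oplus_i$), (E$^+$), (CC$^+$) use monotonicity of $\cdot$ and $\cod$ together with additivity (\ref{cod_additive}) and the Kleene facts $p\leq pp^*$ and $pp^*pp^*\leq pp^*$; (W$f$) uses $\tra(f)\leq 1$ and (W$p$) uses codomain restriction (\ref{cod_restriction}); (cut) uses $x\leq x\cod(x)$ from (\ref{cod_preserver}); (TC) uses $\tra(b)+\tra(\neg b)=1$ with distributivity and additivity of $\cod$; (R$\To$) is the Galois connection (\ref{cod_Galois}) followed by residuation; and (I$\To$) is the modus ponens inequality, pre- and post-composed with $\tra(\Gamma)$ and $\tra(\Delta)$ and then hit with $\cod$.

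For completeness I would take a model $M=(W,V)$ refuting $\Gamma\vdash f$ and work in the full relational algebra on $W$, which is $^*$-continuous and carries $\cod,\e$ given by (\ref{codomain}) and (\ref{codomain-adjoint}); the $\mathsf{SKAT}$ axioms for $\e$ reduce to the Galois connection (\ref{Galois}), so this algebra lies in $\mathsf{SKAT}^*$. Setting $\val{\mathsf{x}_{2n}}=V(\mathsf{p}_n)$ and $\val{\mathsf{x}_{2n+1}}=V(\mathsf{b}_n)$ (so that $\cod(\val{\mathsf{x}_{2n+1}})=V(\mathsf{b}_n)$ because $V(\mathsf{b}_n)\subseteq\mathrm{id}_W$), a straightforward induction on $\mathsf{S}$-expressions shows $\val{\tra(e)}=\val{e}_M$, the only non-routine cases being $b\To c$ and $p\To f$, which are exactly Proposition \ref{prop:teTo}. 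Since $\cod(\val{\Gamma}_M)\subseteq\val{f}_M$ is precisely the definition of validity of $\Gamma\vdash f$ in $M$, a refuting model gives a valuation violating $\cod(\tra(\Gamma))\leq\tra(f)$, so the inequality is not in the equational theory of $\mathsf{SKAT}^*$.

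I expect the main obstacle to be the iteration rule (I$^+$), which is the sole reason for restricting to $^*$-continuous algebras. Translating it asks me to derive $\cod(\tra(g)\tra(p)\tra(p)^*)\leq\tra(f)$ from $\cod(\tra(g)\tra(p))\leq\tra(f)$ and $\cod(\tra(g)\tra(p))\leq\tra(g)$. My plan is to prove by induction on $n$ that both $\cod(\tra(g)\tra(p)^n)\leq\tra(f)$ and $\cod(\tra(g)\tra(p)^n)\leq\tra(g)$ hold, the step rewriting $\cod(\tra(g)\tra(p)^{n+1})=\cod(\cod(\tra(g)\tra(p)^n)\tra(p))$ by locality (\ref{cod_locality}) and then using the hypothesis $\cod(\tra(g)\tra(p)^n)\leq\tra(g)$ with monotonicity. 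Then $^*$-continuity (\ref{*cont}) expands $\tra(g)\tra(p)\tra(p)^*=\sum_{n\geq 1}\tra(g)\tra(p)^n$, and, crucially, $\cod$ preserves this supremum because it is the lower adjoint of the Galois connection (\ref{cod_Galois}) and hence distributes over every existing join. This adjointness-driven distribution of $\cod$ over an infinite supremum is the one place where the non-continuous theory would break down, and handling the bookkeeping of that infinite join correctly is the part I would treat most carefully.
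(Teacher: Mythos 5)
Your proposal is correct and takes essentially the same route as the paper: the same two-lemma decomposition (a soundness induction on proof length, with (I$^{+}$) handled exactly as in the paper via locality (\ref{cod_locality}), an $n$-indexed induction, and $^{*}$-continuity; and completeness via Theorem \ref{thm:KT-completeness}, the full relational algebra on $W$, and Proposition \ref{prop:teTo}). The only deviations are cosmetic and if anything slightly more careful than the paper --- you explicitly justify $\e(0)=0$, which the paper uses silently when treating $\tra(b\To 0)$ in the (TC) case, and you phrase the last step of (I$^{+}$) as ``the lower adjoint $\cod$ preserves existing joins'' where the paper pushes each term under $\e$ and takes the supremum there, which is the same adjointness argument.
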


Theorem \ref{thm:embedding} is established in two steps. First we show that every model $M$ of \textsf{S} with a valuation $\val{\:}$ can be transformed into a $\mathscr{M} \in \mathsf{SKAT}^{*}$ and a $\mathscr{M}$-valuation $\vval{\,\,}$ such that $\val{\alpha} = \vval{\tra (\alpha)}$ for all $\alpha \in Ex_{\mathsf{S}}$. Then we show by induction on the length of $\mathsf{S}$-proofs that if $\Gamma \vdash f$ is provable in $\mathsf{S}$, then $\cod \big( \tra (\Gamma)\big) \preceq \tra (f)$ belongs to the equational theory of $\mathsf{SKAT}^{*}$.

\begin{lemma}\label{lem:embedding-1}
Let $M = (W, V)$ be a model of $\mathsf{S}$ and let $\val{\:}$ be a $M$-interpretation. Then there is $\mathscr{M} \in \mathsf{SKAT}^{*}$ and a $\mathscr{M}$-valuation $\vval{\:}$ such that, for all $\alpha \in Ex_{\mathsf{S}}$,
\begin{equation}\label{tra}
\val{\alpha} = \vval{\tra (\alpha)}
\end{equation}
\end{lemma}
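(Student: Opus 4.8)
The plan is to construct the algebra $\mathscr{M}$ as the \emph{fully relational} Kleene algebra over $W$, i.e.\ take $K = 2^{W \times W}$ with composition, union, reflexive-transitive closure, identity and $\emptyset$ as the Kleene algebra structure. This is a $^{*}$-continuous Kleene algebra, so the $^{*}$-continuity requirement on $\mathscr{M}$ will come for free. I then equip it with the concrete operations already identified in Section~\ref{sec:RKAT}: the relational residuals $\to,\lto$, the relational anticodomain $\ant$ from~(\ref{anticodomain}), and the operation $\e$ from~(\ref{codomain-adjoint}). The $\mathscr{M}$-valuation $\vval{\,}$ is the unique homomorphism extending $\vval{\mathsf{x}_{2n}} = V(\mathsf{p}_n)$ and $\vval{\mathsf{x}_{2n+1}} = V(\mathsf{b}_n)$; note that because $V(\mathsf{b}_n) \subseteq \mathrm{id}_W$, the relational codomain $\cod = \ant\ant$ fixes these test-images, which will matter for the base case.

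First I must check that $\mathscr{M}$ is genuinely a \textsf{SKAT}-algebra, i.e.\ that the concrete relational operations satisfy the axioms. The residuation law for the relational $\to,\lto$ is standard; the \textsf{KAC} axioms (\ref{ant_ecq})--(\ref{ant_lem}) hold for the relational anticodomain, as already remarked after~(\ref{anticodomain}); and the three $\e$-axioms (\ref{ante1})--(\ref{ante2}) reduce, via the Galois connection~(\ref{Galois}) between the relational $c$ and $e$, to the observation that $\cod\,\e(R) \subseteq R$, $R \subseteq \e\,\cod(R)$ and monotonicity of $\e$ — all immediate from~(\ref{codomain}) and~(\ref{codomain-adjoint}). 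So $\mathscr{M} \in \mathsf{SKAT}^{*}$.

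The heart of the lemma is the identity~(\ref{tra}), which I would prove by structural induction on $\alpha \in Ex_{\mathsf{S}}$ simultaneously over tests, programs, formulas and environments. The atomic cases hold by the definition of $\vval{\,}$ together with $\cod(V(\mathsf{b}_n)) = V(\mathsf{b}_n)$; the case $\tra(0) = \cod(0)$ gives $\emptyset$ on both sides. The cases for $\oplus,\otimes,\epsilon$ and comma are routine since $\tra$ sends them to $+,\cdot,1,\cdot$ and these are interpreted by $\cup,\circ,\mathrm{id}_W,\circ$; the case $p^{+}$ uses $\val{p}^{+} = \val{p}\circ\val{p}^{*}$, which matches $\tra(p^{+}) = \tra(p)\cdot\tra(p)^{*}$ under $^{*}$-continuity. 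The two implication-forming clauses $b \To c$ and $p \To f$ are exactly where Proposition~\ref{prop:teTo} does the work: by the induction hypothesis $\vval{\tra(p)} = \val{p}_M$ and $\vval{\tra(f)} = \val{f}_M$ with $\val{f}_M \subseteq \mathrm{id}_W$, so~(\ref{teTo}) gives $\cod(\vval{\tra(p)} \to \e(\vval{\tra(f)})) = \vval{\tra(p)} \To \val{f}_M = \val{p \To f}_M$, which is precisely $\vval{\tra(p \To f)}$.

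I expect the main obstacle to be bookkeeping rather than conceptual: one must confirm that the side condition $B \subseteq \mathrm{id}_S$ of Proposition~\ref{prop:teTo} is met at every implication node, which relies on the invariant that $\vval{\tra(b)}$ and $\vval{\tra(f)}$ are always subrelations of $\mathrm{id}_W$ for tests $b$ and formulas $f$. This invariant should itself be folded into the induction as an auxiliary claim, proved alongside~(\ref{tra}) using $\cod(R) \subseteq \mathrm{id}_W$ and the fact that~(\ref{teTo}) returns a subrelation of the identity. Once that invariant is in place, every implication clause is a direct instance of~(\ref{teTo}) and the induction closes.
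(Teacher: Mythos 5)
Your proposal is correct and follows essentially the same route as the paper's proof: the paper also takes $\mathscr{M}$ to be the full relational residuated Kleene algebra on $W$ with the relational $\ant$ and $\e$, verifies the $\e$-axioms (\ref{ante1})--(\ref{e-mon}) directly, defines $\vval{\:}$ as the unique homomorphism with $\vval{\mathsf{x}_{2n}} = \val{\mathsf{p}_n}$ and $\vval{\mathsf{x}_{2n+1}} = \val{\mathsf{b}_n}$, and proves (\ref{tra}) by structural induction with Proposition \ref{prop:teTo} handling the two implication clauses. Your explicit folding of the invariant $\vval{\tra(b)}, \vval{\tra(f)} \subseteq \mathrm{id}_W$ into the induction is a point the paper leaves implicit (it follows there from the earlier observation that $\val{f}_M \subseteq \mathrm{id}_W$ and the base case $\cod\val{\mathsf{b}_n} = \val{\mathsf{b}_n}$), but this is a matter of presentation, not substance.
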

\begin{proof}
Fix $M = (W, V)$. Let $\mathscr{M}$ be the full relational residuated Kleene algebra of binary relations on $W$; define $\ant(R) = \{ (u,u) \mid \neg \exists s. \, (s, u) \in R \}$ and $\e(R) = \{ (s,u) \mid (u, u) \in R \}$; note that $\cod (R) = \{ (u,u) \mid \exists s. (s,u) \in R \}$.\footnote{Recall that $R \to Q = \{ (s,u) \mid \forall v.\, (u,v) \in R \implies (s,u) \in Q \}$ and $R \lto Q = \{ (s,u) \mid \forall v.\, (v,s) \in R \implies (v,u) \in Q \}$.} It is easily checked that $\mathscr{M} \in \mathsf{SKAT}^{*}$. In order to prove (\ref{ante1}), note that $(s,t) \in \cod\e (R)$ only if $s = t$ and there is $u$ such that $(u,t) \in \e(R)$, which holds only if $s = t$ and $(t,t) \in R$. In order to prove (\ref{ante2}), note that $(s,t) \in R$ implies $(t,t) \in \cod (R)$, which implies $(s,t) \in \e\cod (R)$. Finally, $\e$ is obviously monotonic, which entails (\ref{e-mon}).

Define $\vval{\:}$ as the unique \textsf{SKAT} homomorphism from $Tm$ to $\mathscr{M}$ satisfying the following, for all $n \in \omega$:
\begin{itemize}
\item $\vval{\mathsf{x}_{2n}} = \val{\mathsf{p}_{n}}$;
\item $\vval{\mathsf{x}_{2n+1}} = \val{\mathsf{b}_{n}}$.
\end{itemize}
We will prove (\ref{tra}) by induction on $\alpha$. First we prove that (\ref{tra}) holds for all tests $b$ by induction on $b$. The base case is established as follows. We know by definition that $\vval{\tra (\mathsf{b}_{n})} = \cod \vval{\mathsf{x}_{2n+1}} = \cod \val{\mathsf{b}_{n}}$. The latter is $\{ (u, u) \mid \exists s. \, (s,u) \in \val{\mathsf{b}_n} \} = \val{\mathsf{b}_{n}}$, since $\val{\mathsf{b}_n} \subseteq \mathrm{id}_W$.  The induction step is established as follows. $\vval{\tra (0)} = \cod \vval{0} =  \emptyset = \val{0}$ and the other case uses (\ref{teTo}) of Proposition \ref{prop:teTo} (we are working with a relational Kleene algebra):
\begin{align*}
\vval{\tra (b \To c)} 
&= \cod ( \vval{\tra (b)} \to \e\vval{\tra (c)}) \\
&= \cod (\val{b} \to \e\val{c})\\
& \overset{\eqref{teTo}}{=} \val{b \To c}
\end{align*}
This shows that (\ref{tra}) holds for all tests $b$. The fact that it holds for all programs is established by easy induction ($\tra$ virtually commutes with Kleene algebra operations). Next, we show that (\ref{tra}) holds for all formulas. The only thing to prove is the induction step for $\alpha = p \To f$. The claim can be established using (\ref{teTo}) as before. The claim for environments is trivial.
\end{proof}

\begin{lemma}\label{lem:embedding-2}
If $\Gamma \vdash f$ is provable in $\mathsf{S}$, then $\cod \big( \tra (\Gamma)\big) \preceq \tra (f)$ belongs to the equational theory of $\mathsf{SKAT}^{*}$.
\end{lemma}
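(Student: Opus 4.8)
The plan is to prove Lemma~\ref{lem:embedding-2} by induction on the length of the $\mathsf{S}$-proof of $\Gamma \vdash f$, showing for each axiom and each inference rule that the translated inequality $\cod(\tra(\Gamma)) \preceq \tra(f)$ is in the equational theory of $\mathsf{SKAT}^{*}$. For the base cases, the axiom (Id) translates to $\cod(\tra(b)) \preceq \tra(b)$, which holds since $\tra(b)$ is always a codomain element (i.e.\ a fixpoint of $\cod$ by \eqref{cod_fixpoint}), so $\cod(\tra(b)) = \tra(b)$. The axiom (I$0$) translates to an inequality whose left-hand side $\cod(\tra(\Gamma)\cdot\cod(0)\cdot\tra(\Delta))$ collapses to $0$ using $\cod(0)=0$ together with the codomain restriction property \eqref{cod_restriction} (since $\cod(xy)\leq\cod(y)$ forces the whole codomain to be $0$ once a factor kills it), and $0$ is below everything.

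For the inductive steps I would organize the rules into groups. The Kleene-algebra structural rules (I$\otimes$), (E$\otimes$), (I$\oplus$), (E$\oplus_1$), (E$\oplus_2$) follow because $\tra$ commutes with $\cdot$ and $+$, so the translated hypotheses and conclusions differ only by trivial rewriting of the semiring terms; additivity of $\cod$ \eqref{cod_additive} handles the $\oplus$-cases. The residuation/implication rules are the heart of the argument: (R$\To$) sends $\Gamma,p \vdash f$ to $\Gamma \vdash p\To f$, and under $\tra$ this is exactly the passage from $\cod(\tra(\Gamma)\cdot\tra(p))\preceq\tra(f)$ to $\cod(\tra(\Gamma))\preceq\cod(\tra(p)\to\e(\tra(f)))$, which I expect to follow by the residuation law \eqref{res} combined with the Galois connection \eqref{cod_Galois} between $\cod$ and $\e$ and the implication property \eqref{cod_implication}. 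Dually, (I$\To$) and (cut) will use residuation in the forward direction together with monotonicity of $\cod$. The weakening rules (W$f$), (W$p$) translate to statements that adding a factor on the left only shrinks the codomain, again via \eqref{cod_restriction}, and (TC) will require the Boolean structure of $\cod(\mathscr K)$ from Proposition~\ref{prop:KAC2} to do the case split on $\cod(\tra(b))$ and its complement $\ant\cod(\tra(b))$.

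The rules governing $^{+}$ --- namely (I$^{+}$), (E$^{+}$), (CC$^{+}$) --- are where I expect the genuine difficulty, and where $^{*}$-continuity is essential. Since $\tra(p^{+}) = \tra(p)\cdot\tra(p)^{*}$, the elimination (E$^{+}$) and contraction (CC$^{+}$) rules should reduce to properties of the Kleene star (unfolding $x^{*}\succeq 1 + xx^{*}$ and transitivity of $^{+}$). The introduction rule (I$^{+}$), however, is an induction principle: from $g,p\vdash f$ and $g,p\vdash g$ one concludes $g,p^{+}\vdash f$, which under $\tra$ becomes a statement that a test-like invariant $\cod(\tra(g))$ is preserved along iterations of $\tra(p)$ and propagates to $\tra(f)$. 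I anticipate proving this by establishing $\cod(\tra(g)\cdot\tra(p)^{n})\preceq\tra(f)$ for every $n$ by an auxiliary induction on $n$ (using the two hypotheses to drive the invariant through each step), and then taking the supremum over $n$ via $^{*}$-continuity \eqref{*cont} to conclude $\cod(\tra(g)\cdot\tra(p)\cdot\tra(p)^{*})\preceq\tra(f)$. The main obstacle will be verifying that codomain and the residual interact correctly with the infinitary supremum --- in particular that $\cod$ distributes over the $^{*}$-continuous sum well enough to pass from the finite-power invariants to the starred conclusion --- and this is precisely why the lemma is stated for $\mathsf{SKAT}^{*}$ rather than $\mathsf{SKAT}$.
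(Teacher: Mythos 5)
Your overall architecture coincides exactly with the paper's proof: induction on the length of the $\mathsf{S}$-proof, with the $\otimes/\oplus$ rules absorbed by the definition of $\tra$, the implication rules handled by residuation plus the Galois connection \eqref{cod_Galois}, weakening handled by test-likeness of translated formulas, and (I$^{+}$) by finite approximants plus $^{*}$-continuity. However, the step you explicitly leave open as ``the main obstacle'' --- whether $\cod$ distributes over the infinitary supremum in the (I$^{+}$) case --- is precisely where your proposal has a gap, and the paper's resolution is that \emph{no} distribution law is needed: one transposes through the adjunction \emph{before} taking the supremum. Concretely, having shown $\cod\big(\tra(g)\tra(p)\tra(p)^{n}\big) \preceq \tra(f)$ for all $n$ (by an auxiliary induction using \eqref{cod_locality}), apply \eqref{cod_Galois} to get $\tra(g)\tra(p)\tra(p)^{n} \leq \e(\tra(f))$ for all $n$; by \eqref{*cont} the element $\tra(g)\tra(p)\tra(p)^{*}$ \emph{is} the supremum of these, hence lies below the common bound $\e(\tra(f))$, and transposing back through \eqref{cod_Galois} gives the conclusion, using that $\tra(f)$ is a fixpoint of $\cod$ by \eqref{cod_fixpoint}. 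Note also an indexing slip: your approximants $\cod(\tra(g)\tra(p)^{n}) \preceq \tra(f)$ fail at $n=0$, since $\cod(\tra(g)) \preceq \tra(f)$ does not follow from the premises $g,p \vdash f$ and $g,p \vdash g$; the invariant must retain a leading factor of $\tra(p)$, matching the conclusion $\tra(p^{+}) = \tra(p)\tra(p)^{*}$.

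Several of your attributions for the remaining rules would also fail as written, though each is repairable. For (W$f$) the formula is inserted in the \emph{middle} of the environment, so \eqref{cod_restriction} does not apply; what is needed is that $\tra(f)$ is a codomain element, hence $\leq 1$, so $\tra(\Gamma)\tra(f)\tra(\Delta) \leq \tra(\Gamma)\tra(\Delta)$ and monotonicity of $\cod$ (via \eqref{cod_additive}) finishes --- \eqref{cod_restriction} is correct only for (W$p$). For (cut), residuation plays no role; the pivot is \eqref{cod_preserver}: from $\cod(x) \leq a$ one gets $xy \leq x\cod(x)y \leq xay$. Property \eqref{cod_implication} belongs to (I$\To$), not (R$\To$): the latter needs only \eqref{cod_Galois}, residuation and monotonicity, while (I$\To$) needs the nontrivial inequality $\cod(x \to \e\cod(y))\,x \leq x\cod(y)$, which rests on \eqref{cod_implication} together with $\cod\e\cod(y) \leq \cod(y)$. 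Finally, for (TC) your case split on $\cod(\tra(b))$ and $\ant(\tra(b))$ via \eqref{ant_lem} is right in spirit, but its crux is identifying the translated negation $\tra(b \To 0) = \cod(\tra(b) \to \e(\tra(0)))$ with $\ant(\tra(b))$; this is exactly \eqref{cod_to_ant}, and appealing to the Boolean structure of $\cod(K)$ alone does not supply that identification. With these repairs your proof becomes the paper's proof.
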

\begin{proof}
We reason by induction on the length of proofs. We note that the proof of this lemma, namely the claim for the (I$^{+}$) rule, is the only point where the assumption of $^{*}$-continuity seems to be required.

 A useful fact to note is that $\tra (f)$ is of the form $\cod(p)$ for all formulas $f$. By a translation of a sequent $\Gamma \vdash f$ we mean the equation $\cod(\tra (\Gamma)) \preceq \tra (f)$. We will prove that translations of axioms of $\mathsf{S}$ are in the equational theory of $\mathsf{SKAT}^{*}$, and that if translations of all premises of an inference rule of $\mathsf{S}$ are in the equational theory of $\mathsf{SKAT}^{*}$, then so is the consequence of the rule. (If $p \preceq q$ is in the equational theory of $\mathsf{SKAT}^{*}$, then we say that $p \preceq q$ is valid.)
 
\begin{itemize}
\item[(Id)] $\cod (\tra (b)) \preceq \tra (b)$ is valid since $\cod\cod (x) \leq \cod (x)$ by (\ref{cod_fixpoint}).\\

\item[(I0)] $ \cod(\tra (\Gamma) \cdot \tra(0) \cdot \tra(\Delta)) \preceq \tra (f)$ is valid since $\tra (0) = \cod(0) = 0$.\\

\item[(TC)] $\dfrac{\cod\big(\tra (\Gamma) \tra (b) \tra (\Delta) \big) \preceq \tra (f) \quad \cod \big( \tra (\Gamma) \tra (b \To 0) \tra (\Delta) \big) \preceq \tra (f)}{\cod \big(\tra (\Gamma) \tra (\Delta) \big) \preceq \tra (f)}$\\[1mm] preserves validity thanks to (\ref{cod_additive}) and (\ref{cod_to_ant}):
\begin{itemize}\setlength{\itemsep}{2mm}
\item[] $\cod (x \cod(y) z) \leq a$ and  $\cod (x \cod (\cod(y) \to 0) z) \leq a$
\item[] $\overset{(\ref{cod_additive})}{\implies}$\: $ \cod \big( x \cod(y) z \,+\, x \cod (\cod(y) \to 0) z \big) \leq a$
\item[] $\overset{(\ref{cod_to_ant})}{\implies}$\:
	$\cod \big ( x \cod (y) z \, +\, x \ant (y) z \big) \leq a$
\item[] $\implies$\: 
	$\cod \big ( x \big ( \cod(y) + \ant (y) \big) z\big) \leq a$
\item[] $\overset{(\ref{ant_lem})}{\implies}$\:
	$\cod (xz) \leq a$	 	
\end{itemize}

\medskip

\item[(cut)] $\dfrac{\cod (\tra (\Gamma)) \preceq \tra (g) \qquad \cod \big( \tra (\Gamma) \tra (g) \tra (\Delta) \big) \preceq \tra (f)}{\cod \big( \tra (\Gamma) \tra (\Delta) \big) \preceq \tra (f)}$\\[1mm] preserves validity thanks to (\ref{cod_preserver}) and (\ref{cod_additive}):
\begin{itemize}\setlength{\itemsep}{2mm}
\item[] $\cod(x) \leq a_2 $ and $ \cod( x a_2 y) \leq a_1$
\item[] $\implies$\:  
	$x \cod(x) y \leq x a_2 y$ and $ \cod( x a_2 y) \leq a_1$
\item[] $\overset{(\ref{cod_preserver})}{\implies}$\:
	$x y \leq x a_2 y$ and $ \cod( x a_2 y) \leq a_1$
\item[] $\overset{(\ref{cod_additive})}{\implies}$\:
	$\cod(x y) \leq \cod (x a_2 y)$ and $ \cod( x a_2 y) \leq a_1$
\item[] $\implies$\:
	$\cod (xy) \leq a_1$	
\end{itemize}

\medskip

\item[(R$\To$)] $\dfrac{\cod (\tra (\Gamma) \tra (p)) \preceq \tra(f)}{\cod(\tra (\Gamma)) \preceq \tra (p \To f)}$ preserves validity thanks to (\ref{cod_Galois}) and (\ref{cod_additive}):
\begin{itemize}\setlength{\itemsep}{2mm}
\item[]  $\cod(xy) \leq z$\:
 $\overset{(\ref{cod_Galois})}{\implies}$\:
 $xy \leq \e (z)$
 \item[] $\implies$\: $x \leq y \to \e (z)$ \:
 $\overset{(\ref{cod_additive})}{\implies}$\:
 $\cod (x) \leq \cod (y \to \e (z))$
\end{itemize}

\medskip

\item[(I$\To$)] $\dfrac{\cod \big( \tra (\Gamma) \tra (p) \tra (f) \tra (\Delta) \big) \preceq \tra (g) }{\cod \big ( \tra (\Gamma) \tra (p \To f) \tra (p) \tra (\Delta) \big ) \preceq \tra(g) }$ preserves validity thanks to (\ref{cod_implication}) and the fact that (\ref{cod_Galois}) entails $\cod \e \cod (y) \leq \cod(y)$:
\begin{gather*}
\cod (x \to \e\cod(y)) x \,\leq\, (x \to x\, \cod\e\cod(y))x \,\leq\, x\, \cod\e\cod(y) \,\leq\, x \cod(y)
\end{gather*}
(Recall that $\tra(p \To f) = \cod (q \to \e\cod (r))$ for some terms $q,r$.)\\

\item[(I$\otimes)$] follows directly from the definition of $\tra$, and so do (E$\otimes$), (I$\oplus$), (E$\oplus_1$) and (E$\oplus_2$).\\ 

\item[(I$^{+}$)] $\dfrac{\cod (\tra (g) \tra (p)) \preceq \tra (f) \quad \cod (\tra (g) \tra (p)) \preceq \tra (g)}{\cod \big (\tra (g) \tra (p) \tra (p)^{*}\big) \preceq \tra (f)}$.\\[1mm] It can be proven by induction on $n$, using (\ref{cod_locality}), that if $\cod(yx) \leq \cod(z)$ and $\cod(yx) \leq y$, then $\forall n \geq 0. \, \cod (yxx^{n}) \leq \cod(z)$. By (\ref{cod_Galois}), $\forall n \geq 0. \, yxx^{n} \leq \e \cod(z)$, and so by $^{*}$-continuity $yxx^{*} \leq \e \cod(z)$, which means that $\cod(yxx^{*}) \leq \cod(z)$. (Note that $^{*}$-continuity entails that infinite suprema of elements of the form $xy^{*}z$ exist.) In fact, this is the only point where we need to assume $^{*}$-continuity.\\

\item[(E$^{+}$)] follows from the fact that $x \leq xx^{*}$ in Kleene algebras; and (CC$^{+}$) follows from the fact that $x^{*}xx^{*} \leq x^{*}$.\\

\item[(W$f$)] $\dfrac{\cod (\tra (\Gamma) \tra (\Delta)) \preceq \tra (g)}{\cod (\tra (\Gamma) \tra (f) \tra (\Delta)) \preceq \tra (g)}$ preserves validity thanks to $\cod(x) \leq 1$.\\

\item[(W$p$)] $\dfrac{\cod (\tra (\Gamma)) \preceq \tra (f)}{\cod (\tra (p) \tra (\Gamma)) \preceq \tra (f)}$ follows from (\ref{cod_restriction}).
\end{itemize}
This concludes the proof of the lemma.
\end{proof}

Now we are in the position to prove Theorem \ref{thm:embedding}.

\begin{proof}
If $\Gamma \vdash f$ is not provable in \textsf{S}, then there is a model $M$ and an interpretation function $\val{\:}$ such that $(w,u) \in \val{\Gamma}$ and $(u,u) \notin \val{f}$ by Kozen and Tiuryn's relational completeness theorem. By Lemma \ref{lem:embedding-1}, there is $\mathscr{M} \in \mathsf{SKAT}^{*}$ and a $\mathscr{M}$-valuation $\vval{\:}$ such that $(u,u) \in \vval{\cod (\tra (\Gamma))}$ and $(u,u) \notin \vval{\tra (f)}$. Hence, $\cod ( \tra (\Gamma)) \preceq \tra (f) $ is not in the equational theory of \textsf{SKAT}$^{*}$. Conversely, if $\Gamma \vdash f$ is provable in \textsf{S}, then $\cod ( \tra (\Gamma)) \preceq \tra (f) $ is in the equational theory of \textsf{SKAT}$^{*}$ by Lemma \ref{lem:embedding-2}.
\end{proof}

\section{Conclusion}\label{sec:conclusion}

We have shown in this paper that Kozen and Tiuryn's substructural logic of partial correctness \textsf{S} embeds into the equational theory of an extension of $^{*}$-continuous Kleene algebra with codomain with both residuals of the Kleene algebra multiplication and the upper adjoint of the codomain operator, $\mathsf{SKAT}^{*}$. We believe that this result sheds more light on the landscape of program logics and algebras.

A number of interesting problems remain open:
\begin{enumerate}
\item Does $\mathsf{S}$ embed into the equational theory of $\mathsf{SKAT}$?

\item Which residuated Kleene algebras extend to algebras in \textsf{SKAT} or \textsf{SKAT}$^{*}$? Is the equational theory of \textsf{SKAT} (\textsf{SKAT}$^{*}$) a conservative extension of the equational theory of residuated ($^{*}$-continuous) Kleene algebras?

\item Related to the previous problem, is the equational theory of $\mathsf{SKAT}^{*}$ identical to the equational theory of $\mathsf{SKAT}$?  The equational theory of $^{*}$-continuous residuated Kleene algebras is not identical to the equational theory of residuated Kleene algebras \cite{Buszkowski2006a}.

\item What are the free algebras in $\mathsf{SKAT}$ and $\mathsf{SKAT}^{*}$? McLean \cite{McLean2020} shows that regular sets of labelled pointed trees form free Kleene algebras with domain, and a version of his construction may turn out to apply to our case as well.

\item What is the complexity of the equational theories of $\mathsf{SKAT}$ and $\mathsf{SKAT}^{*}$? The equational theory of $^{*}$-continuous residuated Kleene algebras is $\Pi^{0}_1$-complete (hence not r.e.) and the equational theory of residuated Kleene algebras is $\Sigma^{0}_1$-complete.

\item Consider the set of terms generated by the following restricted grammar:
$$p, q := \mathsf{x}_i \mid 1 \mid 0 \mid p \cdot q \mid p + q \mid p^{*} \mid \cod(p) \mid \cod(p \to \e \cod(q)) \, .$$
Is the fragment of the equational theory of $\mathsf{SKAT}$ restricted to terms generated by this grammar decidable?

\item What are the Gentzen systems sound and (weakly) complete for $\mathsf{SKAT}$, in the style of \cite{Jipsen2004}?
\end{enumerate}

\subsubsection*{Acknowledgement}
We are grateful to two anonymous reviewers for valuable suggestions. The first author is grateful to V\'{i}t Pun\v{c}och\'{a}\v{r} for discussions that led to the work on this paper, and to the audience at the Proof Theory Seminar of the Steklov Mathematical Institute for illuminating discussion after his talk on 14 February 2022. The work of the second author was carried out within the project \textit{Supporting the Internationalization of the Institute of Computer Science of the Czech Academy of Sciences} (no.\ CZ.02.2.69/0.0/0.0/18\_053/0017594), funded by the Operational Programme Research, Development and Education of the Ministry of Education, Youth and Sports of the Czech Republic. The project is co-funded by the EU.

\end{document}